\newtheorem{thm}{Theorem} [section]
\newtheorem{lemma}[thm]{Lemma}
\newtheorem{assmp}{Assumption}[section]
\newtheorem{remark}{Remark}[section]
\numberwithin{equation}{section} %give equations number
\renewcommand{\geq}{\geqslant}
\renewcommand{\leq}{\leqslant}
\newcommand{\citethm}[1]{Theorem \ref{#1}}
\newcommand{\citelem}[1]{Lemma \ref{#1}}
\newcommand{\citeassmp}[1]{Assumption \ref{#1}}
\newcommand{\citeremark}[1]{Remark \ref{#1}}
\newcommand{\citesec}[1]{Section \ref{#1}}
\newcommand{\opfont}{\mathbb}
\newcommand{\be}{\ensuremath{\opfont{E}}}
\newcommand{\BE}[2][]{\ensuremath{\operatorname{\opfont{E}}^{#1}\!\left[#2\right]}}
\newcommand{\R}{\ensuremath{\operatorname{\mathbb{R}}}}
\newcommand{\var}[1]{\ensuremath{\operatorname{{V}ar}\!\left(#1\right)}}
\newcommand{\dd}{\ensuremath{\operatorname{d}\! }}
\newcommand{\dt}{\ensuremath{\operatorname{d}\! t}}
\newcommand{\ds}{\ensuremath{\operatorname{d}\! s}}
\newcommand{\ddp}{\ensuremath{\operatorname{d}\! p}}
\newcommand{\setg}{\mathscr{G}}
\newcommand{\setq}{\mathscr{Q}} 
\newcommand{\setr}{\mathscr{R}}
\newcommand{\setc}{\mathscr{C}}
\newcommand{\idd}[1]{\ensuremath{\operatorname{\mathds{1}}_{#1}}}
\newcommand{\nn}{\nonumber}
\newcommand{\pa}{\pi}
\newcommand{\wa}{\beta_{\mathrm{insured}}}
\newcommand{\uinsured}{\mathcal{U}_{\mathrm{insured}}}
\newcommand{\uinsurer}{\mathcal{U}_{\mathrm{insurer}}}
\newcommand{\BV}{\ensuremath{\mathscr{E}}}
\newcommand{\einf}{\ensuremath{\mathrm{ess\:inf\:}}}
\newcommand{\esup}{\ensuremath{\mathrm{ess\:sup\:}}}
\newcommand{\barq}{\overline{Q}}
\newcommand{\barg}{\overline{G}}
\newcommand{\ep}{\varepsilon}
\newcommand{\opl}{\Phi}
\newcommand{\ms}{m_0}
\newcommand{\vare}{\theta}
\newcommand{\varc}{\sigma}
\newcommand{\ol}[1]{\mathscr{L}_{#1}}
\newcommand{\oplb}{\Psi}
\newcommand{\dsp}{\displaystyle}
\newcommand{\oplc}[1]{\mathscr{K}_{#1}}
\newcommand{\pw}{\mu}
\newcommand{\opld}{\Gamma}
\newcommand{\ople}{\Lambda}
\newcommand{\varf}{\rho}
\newcommand{\setac}{{AC}}
\begin{document}
%%%%%%%%%%%%%%%%%%%%%%%%%%%%%%%%%%%%%%%%%%%%%%%% 
\title{Moral-hazard-free insurance: mean-variance premium principle and rank-dependent utility theory}
\author{Zuo Quan Xu\thanks{Department of Applied Mathematics, The Hong Kong Polytechnic University, Kowloon, Hong Kong, China. Email: \url{maxu@polyu.edu.hk}. The author acknowledges financial support from The NSFC (No.11971409), The Hong Kong RGC (GRF No.15202421 and No.15202817), The PolyU-SDU Joint Research Center on Financial Mathematics, The CAS AMSS-POLYU Joint Laboratory of Applied Mathematics, and The Hong Kong Polytechnic University.}}
\date{\today}
\maketitle

\begin{abstract} 
This paper investigates a Pareto optimal insurance problem, where the insured maximizes her rank-dependent utility preference and the insurer is risk neutral and employs the mean-variance premium principle. To eliminate potential moral hazard issues, we only consider the so-called moral-hazard-free insurance contracts that obey the incentive compatibility constraint. 
The insurance problem is first formulated as a non-concave maximization problem involving Choquet expectation, then turned into a concave quantile optimization problem and finally solved by the calculus of variations method. The optimal contract is expressed by a second-order ordinary integro-differential equation with nonlocal operator. 
An effective numerical method is proposed to compute the optimal contract assuming the probability weighting function has a density. Also, we provide an example which is analytically solved. 
\bigskip\\
\textbf{Keywords: } Optimal insurance; moral-hazard-free insurance; rank-dependent utility theory; mean-variance premium principle; quantile optimization
\end{abstract}
%\newpage

%%%%%%%%%%%%%%%%%%%%%%%%%%%%%%%%%%%%%%%%%%%%%%%% 
\section{Introduction.}
%%%%%%%%%%%%%%%%%%%%%%%%%%%%%%%%%%%%%%%%%%%%%%%% 

Insurance is a widely used tool that transfers a part of a risk from an innocent party (the insured) to an insurance carrier (the insurer, an insurance or reinsurance company) at the cost of paying a premium.
The well-known insurance contracts include quota share, deductible and full coverage. 
In practice, a fundamental challenge for insurers is how to design insurance contracts that achieve Pareto optimality/efficiency (PO/PE, for short) between the insurer and the insured. This paper focuses on such a risk-sharing insurance design problem. 

The diversity of insurer's and insured's premium principles and goals results in 
various PO insurance contracts.
The expected-value, standard derivation, and variance premium principles are the most widely used premium principles in the insurance study and practice. The most popular goals of the insured include maximizing the expected utility (EU) of the net wealth in the static setting, maximizing the discounted dividend payout and minimizing the probability of ruin in the dynamic setting. In this paper, we consider a static model and assume the insurer employs the mean-variance premium principle, a combination of the expected-value and variance premium principles; while the insured aims at maximizing the rank-dependent utility (RDU, developed by Quiggin \cite{Q82}) preference of her net wealth. Because the RDU preference involves a probability weighting function (also called probability distortion function), the target of the insured becomes a nonlinear Choquet expectation, making our problem being a challenging non-concave maximization problem. Besides the RDU theory, probability weighting function also plays a key role in many other behavioral theories of choice under uncertainty, such as Kahneman and Tversky's \cite{KT79, TK92} cumulative prospect theory, Yaari's \cite{Y87} dual model, Lopes' \cite{L87} SP/A model. These behavioral theories provide satisfactory answers to many paradoxes for which the EU theory fails to explain (see, e.g. Friedman and Savage \cite{FS48}, Allais \cite{A53}, Ellsberg \cite{E61}, Mehra and Prescott \cite{MP85}). In this paper, we consider general probability weighting functions which are allowed to be discontinuous. 
By contrast, many existing insurance models only consider differentiable probability weighting functions; see, Jin and Zhou \cite{JZ08}, Xu \cite{X16}, Xu, Zhou and Zhuang \cite{XZZ19}, Xu \cite{X21}, among many others. 

The present paper is a follow-up study of the author's previous work \cite{X21}. In the previous work, we assume the insurer uses the expected-value premium principle, whereas in the present model the insurer employs the more practical mean-variance premium principle. The former principle is \emph{linear} so that it is additive; by contrast, the latter is \emph{nonlinear and non-additive}, which leads to a more challenging problem. The mean-variance premium principle is a combination of the expected-value and variance premium principles. It contains each of them as a special case. Also, it takes into account the variability of the insurer's share, so it is less problematic in practice than some other premium principles such as the expected-value premium principle. See Deprez and Gerber \cite{DG85}, Kaluszka \cite{K01}, Hipp and Taksar \cite{HT10}, Yao, Yang and Wang \cite{YYW14}, Liang, Liang and Young \cite{LLY20} with mean-variance or standard deviation premium principles in either static or dynamic settings. As mentioned above, another difference between this paper and existing works is that we consider quite general probability weighting functions rather than only those absolutely continuous ones as in Xu \cite{X21}. 
The new setup not only covers more practical cases, but also brings a lot of mathematical challenges. 
Mathematically speaking, it leads to a new semi-linear second-order ordinary integro-differential equation (OIDE) with \emph{nonlocal} operator, which is not easy to solve, even numerically. If the probability weighting function has a density, we further reduce the OIDE to a well-studied ordinary differential equation (ODE) of two unknown functions with \emph{local} operator. It is an initial value ODE problem and can be numerically solved effectively. Unfortunately, we cannot find a way to do this when the probability weighting function has no density. We leave this to experts in numerical OIDEs. 

An important feature of our model (same as Xu \cite{X21}) is that we take the so-called \emph{incentive compatibility constraint} into account. In the insurance literature, many works ignore this constraint intentionally or unintentionally; see, Deprez and Gerber \cite{DG85}, Gajek and Zagrodny \cite{GZ00}, Kaluszka \cite{K01}, Liang, Liang and Young \cite{LLY20}, Guan, Xu and Zhou \cite{GXZ22}, to name a few. We believe it is mainly due to the mathematical challenges arising from the constraint to force the authors to ignore this constraint. 
In some existing models, the optimal contracts turn out to be quota share or stop loss so that the incentive compatibility constraint is automatically satisfied. But more often, in particular when probability weighting function is involved, the optimal contracts may suffer from some serious moral hazard issues such as hiding or exaggerating losses; see, e.g., Bernard et al. \cite{BHYZ15}. A large proportion of existing works do not discuss if their optimal contracts are free of moral hazard issues. 
Economically speaking, when modeling insurance problems, concerns from both the insurer and the insured should be taken into account simultaneously; so, to avoid the potential moral hazard issues, both compensation and retention functions shall be \emph{a prior} increasing\footnote{In this paper, ``increasing'' means non-decreasing and ``decreasing'' means non-increasing.} for the optimal contracts. This simple fact is called the \emph{incentive compatibility constraint} by Huberman, Mayers and Smith Jr \cite{HMS83} and Picard \cite{P00}. We call an insurance contract \emph{moral-hazard-free} if it obeys this constraint. We only focus on the moral-hazard-free contracts in this paper, so the optimal contracts automatically avoids the potential moral hazard issues.
Mathematically speaking, this constraint leads to 
the second-type quantile optimization problem defined by Xu \cite{X21}. The decision quantiles are subject to a bounded derivative constraint, an infinity-dimensional constraint. This usually leads to a double-obstacle OIDE/ODE problem. 
It is quite different from the first-type quantile optimization problem where the decision quantiles are subject to a one-side derivative constraint. If one ignores the incentive compatibility constraint when designing PO contracts, the problem reduces to a single-obstacle OIDE/ODE problem. In fact, the first-type quantile optimization problem can be solved by a simple relaxation method; please refer to Xu \cite{X16} and Hou and Xu \cite{HX16} for the latest development of the relaxation method. 

Our approach to solving the insurance design problem in this paper consists of several steps: we first transform the problem into a quantile optimization problem; and then show the latter is a concave problem; the calculus of variations method (or equivalently, the first order condition) is then applied to get an equivalent optimality condition; we further derive an equivalent condition in terms of a semi-linear second order OIDE with nonlocal operator. 
Eventually, the optimal solution to the original problem is expressed by the solution to the OIDE. In this process, we show the OIDE has an \emph{almost} classical solution. This is the best possible smoothness result, since the OIDE has no classical solution in general, which can be seen from our example in \citesec{example}. To the best of our knowledge, it is the first time that this type of OIDE with nonlocal operator appears in the insurance and financial economics literature. 

The above approach is introduced by the author in \cite{X21}. The linear expected-value premium principle is assumed for the insurer in \cite{X21}, so the quantile optimization problem is naturally a concave optimization problem. Hence, the calculus of variations method gives an equivalent optimality condition. This paper considers the nonlinear mean-variance premium principle, in order to get an equivalent optimality condition from the first order condition, we must show that the quantile optimization problem is concave. We have successfully shown this, so the approach proposed in \cite{X21} still works. 
Other methods to deal with risk-sharing problems under the incentive compatibility constraint are available in the literature. For instance, Carlier and Lachapelle \cite{CL11} use a probabilistic method to study a class of risk-sharing problems. They provide an iteration method to get the numerical solution. They show the convergence of their scheme, but do not give the speed of convergence, which seems to be a very hard problem. By contrast, our method reduces the problem to solve an OIDE/ODE problem, where the solvability is well-established in the numerical differential equation literature. The most up-to-date numerical methods to solve differential equations such as neural networks and deep learning might be applied to them as well. 

The rest of this paper is organized as follows. In \citesec{problem}, we introduce a PO insurance problem. In \citesec{quantileproblem}, the problem is turned into a quantile optimization problem via change of variables. We also show the quantile optimization problem is a concave one in this part. \citesec{solution} is devoted to solving the quantile optimization problem by the calculus of variations method. We express the optimal solution to the original problem by the solution of some OIDE/ODE. 
An example with an explicit solution is provided as well. \citesec{conclusion} concludes the paper. 
%%%%%%%%%%%%%%%%%%%%%%%%%%%%%%%%%%%%%%%%%%%%%%%% 
\subsection*{Notation.}
%%%%%%%%%%%%%%%%%%%%%%%%%%%%%%%%%%%%%%%%%%%%%%%% 
Throughout the paper, we fix an atom-less probability space. 
For any random variable $Y$ in the probability space, we denote its probability distribution function by $F_{Y}$; and define its quantile function (or the left-continuous inverse function of $F_{Y}$) by
\[F^{-1}_{Y}(p)=\inf\big\{z\in\R\;\big|\; F_{Y}(z)\geq p\big\}, \quad p\in(0, 1], \]
with the convention that $\inf\emptyset=+\infty$ and set
\[F^{-1}_{Y}(0)=\lim_{p\to 0+}F^{-1}_{Y}(p).\] By this definition, $F^{-1}_{Y}(0)=\einf Y$ and $F^{-1}_{Y}(1)=\esup Y$. So $Y$ is a bounded random variable if and only if 
\[-\infty<F^{-1}_{Y}(0)\leq F^{-1}_{Y}(1)<\infty.\] 
\par
By definition, all the quantile functions (or simply called quantiles, for short) are increasing and left-continuous.
Generally speaking, quantile functions may be discontinuous, but 
because of the presence of the incentive compatibility constraint in our model, we will face absolutely continuous quantiles only in this paper. This will simplify our arguments in many circumstances. 
\par
For $p\geq 1$, let $L^p([0, 1])$ be the set of measurable functions $f: [0, 1]\to \R$ such that \[\int_0^1 |f(t)|^p\dt<\infty.\] 
Let $\setac([0, 1])$ be the set of absolutely continuous functions $f: [0, 1]\to \R$.
Let $C^{2-}([0, 1])$ be the set of functions $f: [0, 1]\to \R$ that are differentiable with derivatives $f'\in \setac([0, 1])$. Clearly $C^{2}([0, 1])\subseteq C^{2-}([0, 1])\subseteq C^{1}([0, 1])$. 
\par
In what follows, ``almost everywhere'' (a.e.) and ``almost surely'' (a.s.) may be suppressed for notation simplicity in most circumstances if no confusion would occur. 

%%%%%%%%%%%%%%%%%%%%%%%%%%%%%%%%%%%%%%%%%%%%%%%% 
\section{Problem formulation.}\label{problem}
%%%%%%%%%%%%%%%%%%%%%%%%%%%%%%%%%%%%%%%%%%%%%%%% 

In the Pareto optimal (also called Pareto efficient) insurance problem, one seeks the best way to share a potential loss between an insurer (``He'', an insurance or reinsurance company) and an insured (``She'') to achieve Pareto optimality/efficiency.
\par
We use the same notation as in Xu \cite{X21}. We use a random variable $X\geq 0$ to represent the potential loss covered by the insurance contract.
Let $I(x)$ and $R(x)$ be the loss borne by the insurer and the insured when a real loss $X=x$ occurs. They are respectively called the \emph{compensation} (also called \emph{indemnity}) and \emph{retention} functions in the insurance literature. 
A contract is called full coverage if $I(x)\equiv x$; called deductible (with deductible $d$) if $I(x)\equiv \max\{x-d, 0\}$. Economically speaking, both the insurer and the insured should bear a greater financial responsibility when a larger loss would occur, otherwise moral hazard issue may arise (see more discussions in Bernard et al. \cite{BHYZ15} and Xu, Zhou and Zhuang \cite{XZZ19}). This is called the \emph{incentive compatibility constraint} by Huberman, Mayers and Smith Jr \cite{HMS83} and Picard \cite{P00}. 
Mathematically speaking, because $I(x)+R(x)\equiv x$ and both $I$ and $R$ are increasing, 
the set of acceptable compensations is\footnote{We refer to Xu \cite{X21} for a detailed discussion. } 
\begin{align*} 
\setc&=\Big\{I:[0, \infty)\to [0, \infty)\; \big|\; \mbox{$I$ is absolutely }\nn\\
&\qquad\quad~\mbox{ continuous with $I(0)=0$ and $0\leq I'\leq 1$ a.e.}\Big\}, 
\end{align*} 
and the set of acceptable retentions is 
\begin{align*} 
\setr 
&=\Big\{R:[0, \infty)\to [0, \infty)\; \big|\; \mbox{$R$ is absolutely }\nn\\
&\qquad\quad~\mbox{ continuous with $R(0)=0$ and $0\leq R'\leq 1$ a.e.}\Big\}.
\end{align*} 
Clearly $\setr=\setc$. We call the compensations in $\setc$ and the retentions in $\setr$ \emph{moral-hazard-free} and will only consider them in the sequel.
\par
An insurance contract is a pair $(\pa, I)$, where $\pa\in\R$ is a premium that the insured pays to the insurer at initial time and $I$ is a moral-hazard-free compensation in $\setc$.
We assume the insurer uses the mean-variance premium principle
\begin{align} 
\uinsurer(\pa, I)=\pa-\vare\BE{I(X)}-\varc\var{I(X)}, 
\end{align} 
where $\vare$ and $\varc$ are nonnegative constants. In practice, there is usually a safety loading $\vare$ for the insurer (see, e.g., Daykin et al., \cite{DPP94}), and $\varc$ is used to control the volatility of the insurer's share. When $\varc=0$, the mean-variance premium principle reduces to the classical expected-value premium principle that is considered by Xu \cite{X21}, we will not investigate this case again, hence assume $\varc>0$ from now on. 
Meanwhile, we assume the insured evaluates contracts by the rank-dependent utility preference 
\begin{align} \label{uinsured}
\uinsured(\pa, I)=\BV\Big(u\big(\wa-\pa-X+I(X)\big)\Big).
\end{align} 
Here the constant $\wa$ stands for the final wealth of the insured, so $\wa-\pa-X+I(X)$ is the insured's net wealth after deducting claims. The insured's utility function $u$ is concave, strictly increasing and differentiable on $\R$, which implies $u'$ is a continuous positive function. 
The expectation $\BV$ for a random variable $Y$ is defined as 
\begin{align}\label{vdef}
\BV(Y)=\int_{[0, 1]} F_{Y}^{-1}(p) \pw(\ddp), 
\end{align} 
where $\pw$ is a (probability) measure on $[0, 1]$ with $\pw(\{0\})=0$. In our presentation below, $Y$ will represent bounded random variables, so $\BV(Y)$ are always well-defined. 
The expectation $\BV$ is nonlinear (indeed it is a Choquet expectation) except for the trivial case $ \pw(\ddp)=\ddp$ where $\BV$ reduces to the classical linear mathematical expectation $\be$. 
\begin{remark}
In He et al. \cite{HJZ15}, $-\BV(Y)$ is called 
the weighted VaR (WVaR) risk measure for $Y$, which is a generalization of Value-at-Risk (VaR) and Expected Shortfall (ES), and encompasses many well-known risk measures that are widely used in finance and actuarial sciences, such as spectral risk measures and distortion risk measures; see Wei \cite{W18} for a review.
\end{remark}

\begin{remark}\label{remark:bhm}
In many existing works such as \cite{X16}, \cite{XZZ19}, \cite{X21}, the expectation $\BV$ is defined as 
\begin{align*} 
\BV(Y)=\int_{[0,\infty)}z\dd \big(1-w(1-F_{Y}(z)) \big),
\end{align*} 
where $w$ is a probability weighting function that is absolutely continuous, increasing and one-to-one maps $[0,1]$ to itself. By change of variable, we get
\begin{align*} 
\BV(Y)=\int_{[0, 1]} F^{-1}_{Y}(z)w'(1-p)\ddp. 
\end{align*} 
Hence, the probability measure $\pw$ in \eqref{vdef} in this case is given by \[\dd\pw=w'(1-p)\ddp.\] When $w$ is not absolutely continuous, we can take 
\[\mu([0,p])=1-w(1-p).\]
\end{remark}

\par
A contract $({\pa}^*, {I}^*)$ is called Pareto optimal/efficient if there is no other one feasible contract $(\pa, I)$ such that 
\[\uinsured(\pa, I)\geq \uinsured({\pa}^*, {I}^*), \quad \uinsurer(\pa, I)\geq \uinsurer({\pa}^*, {I}^*)\] 
and 
\[\uinsured(\pa, I)+\uinsurer(\pa, I)> \uinsured({\pa}^*, {I}^*)+\uinsurer({\pa}^*, {I}^*).\] 
In other words, it is impossible to increase one of the insurer's and the insured's valuations for a PO contract without reducing the other one. All the PO contracts form a set, called the Pareto frontier. 

A contract $({\pa}^*, {I}^*)$ is PO if and only if there exists a $\gamma\in\R$ such that $({\pa}^*, {I}^*)$ is an optimal solution to the problem 
\begin{align*} 
\sup_{\pa\in\R, \; I\in\setc}&\quad \uinsured(\pa, I) \\
\mathrm{s.t.\ \ }&\quad \uinsurer(\pa, I)\geq \gamma. 
\end{align*} 
Under our specific setting, the above becomes 
\begin{align*} 
\sup_{\pa\in\R, \; I\in\setc}&\quad \BV\Big(u\big(\wa-\pa-X+I(X)\big)\Big) \\
\mathrm{s.t.\ \ }&\quad \pa-\vare\BE{I(X)}-\varc\var{I(X)}\geq \gamma. 
\end{align*} 
Notice the objective $\BV\Big(u\big(\wa-\pa-X+I(X)\big)\Big)$ is decreasing in $\pa$, so any PO contract $(\pa^*, I^*)$ shall make the constraint tight, namely
\[\pa^*-\vare\BE{I^*(X)}-\varc\var{I^*(X)}=\gamma.\]
Therefore, by removing $\pa^*$ from the above problem, it suffices to study the problem 
\begin{align} \label{opi}
\sup_{I\in\setc}&\quad \BV\Big(u\big(\wa-\gamma-\vare\BE{I(X)}-\varc\var{I(X)}-X+I(X)\big)\Big). 
\end{align} 
If $I^*_{\gamma}$ is an optimal solution to the above problem \eqref{opi}, then 
\[\big(\gamma+\vare\BE{I^*_{\gamma}(X)}+\varc\var{I^*_{\gamma}(X)}, \; I^*_{\gamma}\big)\]
is a PO contract. Conversely, every PO contract is of the above form with certain $\gamma$. Hence, it suffices to solve the problem \eqref{opi}. Without confusion, we also call its solution (which is indeed an optimal compensation) a PO moral-hazard-free contract. In the same spirit, a PO contract is called deductible if the compensation in the contract is a deductible one. 

\par
Following Xu \cite{X21}, we put the following technical assumptions on $X$ throughout the paper. 
\begin{assmp}\label{ass1}
The quantile function $F_{X}^{-1}$ of the potential loss $X$ satisfies $F_{X}^{-1}(0)=\einf X=0$ and $F_{X}^{-1}(1)=\esup X<\infty$. Furthermore, $F_{X}^{-1}\in\setac([0, 1])$ and $\big(F_{X}^{-1}\big)'(p)> 0$ for a.e. $p\in(\ms, 1)$, where $\ms=F_X(0)<1$. 
\end{assmp} 
If $\ms>0$, then $X$ has a positive mass at 0, so \citeassmp{ass1} covers the most common and important case with loss having a positive mass at 0 in insurance practice. Under \citeassmp{ass1}, we only need to deal with bounded random variables throughout this paper, which will simplify our subsequent arguments. Remark that our method can be also applied to the case of unbounded potential loss $X$, but it requires more careful mathematical derivations such as on integrations. This is out of the main goal of this paper, so we leave it to the interested readers. 

Under \citeassmp{ass1}, the probability distribution function $F_X$ is continuous on $[0, 1]$ and strictly increasing on $[\ms, 1]$. Moreover, we have $F_{X}^{-1}(F_X(x))=x$ for all $\einf X\leq x\leq \esup X$, $F_{X}^{-1}(p)=0$ for $p\leq \ms$ and $F_{X}^{-1}(p)>0$ for $p>\ms$. These facts will be used frequently in the subsequent analysis without claim. 
\par

%%%%%%%%%%%%%%%%%%%%%%%%%%%%%%%%%%%%%%%%%%%%%%%% 
\section{Quantile optimization problem.}\label{quantileproblem}
%%%%%%%%%%%%%%%%%%%%%%%%%%%%%%%%%%%%%%%%%%%%%%%% 
Generally speaking, the probability measure $\pw$ in \eqref{vdef} makes the preference $\BV$ a \emph{nonlinear} expectation (in fact, it is a Choquet expectation), so the problem \eqref{opi} is a challenging non-concave optimization problem. To tackle the problem \eqref{opi}, we use the so-called \emph{quantile optimization method}; see \cite{CDT00,DS07,CD08, JZ08, HZ11,XZ13, BHYZ15, X16, HX16,XZ16, W18, XZZ19, X21, MX21}
for the recent development of this method. 
\par
Our approach was introduced by Xu \cite{X16}. 
Following Xu \cite{X16}, we first make change of variables to find an equivalent quantile optimization problem to the problem \eqref{opi}, then use the quantile optimization 
techniques to study it, and finally recover the optimal solution to the problem \eqref{opi}. 
\par 
Because our probability space is atom-less, there exists a random variable $U$, which is uniformly distributed on $(0, 1)$, such that $X=F_{X}^{-1}(U)$ almost surely (see, e.g. Xu \cite{X14}). 
For $R\in\setr$, let 
\begin{align} \label{GR}
G(p)=R\big(F_{X}^{-1}(p)\big), \quad p\in[0, 1].
\end{align}
Then $G$ is an increasing function and satisfies 
\begin{align} \label{G(u)}
G(U)=R\big(F_{X}^{-1}(U)\big)=R(X)=X-I(X).
\end{align} 
Furthermore, using the fact $F_{X}^{-1}(F_X(x))=x$, 
\begin{align} \label{RG}
R(x)=R\big(F_{X}^{-1}(F_X(x))\big)=G(F_X(x)), \quad x\in
\big[\einf X, \; \esup X\big]. 
\end{align}
Writing 
\begin{align*} 
Y&=\wa-\gamma-\vare\BE{I(X)}-\varc\var{I(X)}-X+I(X)\\
&=\wa-\gamma-\vare\BE{I(X)}-\varc\var{I(X)}-G(U), 
\end{align*} 
it is not hard to verify that the quantile function of $Y$ is given by 
\[F_{Y}^{-1}(p)=\wa-\gamma-\vare\BE{I(X)}-\varc\var{I(X)}-G(1-p), \quad \mbox{a.e.}\; p\in[0, 1].\]
Inserting it into \eqref{uinsured}, we get 
\begin{align*} 
&\quad\;\:\BV\Big(u\big(\wa-\gamma-\vare\BE{I(X)}-\varc\var{I(X)}-X+I(X)\big)\Big)\\
&=\int_{[0, 1]}u\big(\wa-\gamma-\vare\BE{I(X)}-\varc\var{I(X)}-G(1-p)\big) \pw(\ddp).
\end{align*} 
Thanks to \eqref{G(u)}, we have 
\[\BE{I(X)}=\BE{X-G(U)}=\BE{X}-\int_{0}^{1}G(t)\dt\]
and 
\begin{align*} 
\var{I(X)}
&=\var{X-G(U)}\\
&=\var{G(U)}-2\BE{XG(U)}+2\BE{X}\BE{G(U)}+\var{X}\\
&=\BE{G(U)^2}-\big(\BE{G(U)}\big)^2-2\BE{F_{X}^{-1}(U)G(U)}\\
&\qquad+2\BE{X}\BE{G(U)}+\var{X}\\
&=\int_{0}^{1}G(t)^2\dt-\Big(\int_{0}^{1}G(t)\dt\Big)^2-2\int_{0}^{1} F_{X}^{-1}(t)G(t)\dt\\
&\qquad+2\BE{X} \int_{0}^{1}G(t)\dt+\var{X}, 
\end{align*} 
so 
\begin{multline*} 
\qquad\BV\Big(u\big(\wa-\gamma-\vare\BE{I(X)}-\varc\var{I(X)}-X+I(X)\big)\Big) \\
=\int_{[0, 1]}u\big(\ol{G}-G(1-p)\big) \pw(\ddp).\qquad
\end{multline*}
where the operator $\ol{}: L^2([0, 1])\to \R$ is defined as 
\begin{align*} 
\ol{f(\cdot)}&=\varc\Big(\int_{0}^{1}f(t)\dt\Big)^2-\varc\int_{0}^{1}f(t)^2\dt+2\varc\int_{0}^{1}F_{X}^{-1}(t)f(t)\dt\\
&\qquad\;+(\vare-2\varc\BE{X})\int_{0}^{1}f(t)\dt+\wa-\gamma-\vare\BE{X}-\varc\var{X}.
\end{align*} 
For any constant $c$, we have
\begin{align}\label{olproperty}
\ol{f(\cdot)+c}&=\ol{f(\cdot)}+c\vare. 
\end{align} 
\par
We now rewrite the compatibility constraint on $R\in\setr$ in terms of the new decision variable $G$. It is not hard to show that $R\in\setr$ if and only if $G\in\setg$,\footnote{For more details we refer to Xu, Zhou and Zhuang \cite{XZZ19}.} where 
\begin{align*}
\setg&=\Big\{G:[0, 1]\to [0, \infty)\;\big|\; \mbox{$G$ is absolutely} \nn\\
&\qquad\quad~\mbox{ continuous with $G(0)=0$ and $0\leq G'\leq h$ a.e.}\Big\}, \label{setg}
\end{align*} 
and 
\begin{align} \label{def:h}
h(p)=\left(F_{X}^{-1}\right)'(p)\geq 0, \quad \mbox{a.e.}\;p\in[0, 1].
\end{align} 
Thanks to \citeassmp{ass1}, 
\begin{align} \label{h1}
\int_0^1 h(t)\dt=F_{X}^{-1}(1)=\esup X<\infty.
\end{align} 
As a consequence, we have 
\begin{align} \label{boundedsetg}
0\leq G\leq \esup X,\quad G\in\setg. 
\end{align} 

The preceding change of variables reduces the optimization problem \eqref{opi} under compatibility constraint to the following second-type quantile optimization problem 
\begin{align} \label{opi2}
\dsp\sup_{G\in\setg}&\; \int_{[0, 1]}u\big(\ol{G}-G(1-p)\big) \pw(\ddp).
\end{align}
At first sight, because of the nonlinear term $ \varc\big(\int_{0}^{1}G(t)\dt\big)^2-\varc\int_{0}^{1}G(t)^2\dt$ in $\ol{G}$, it seems that the problem \eqref{opi2} is not a concave optimization problem. But it is indeed a concave optimization problem, which will be shown by the following lemma. 

\begin{lemma}\label{concaveol}
Suppose $0<\ep<1$ and $f_1$, $f_2\in L^2([0, 1])$. Then 
\[\ep\ol{f_1(\cdot)}+(1-\ep)\ol{f_2(\cdot)}\leq \ol{\ep f_1(\cdot)+(1-\ep)f_2(\cdot)}.\]
Moreover, the identity holds if and only if $f_1-f_2$ is a constant function in $L^2([0, 1])$. 
\end{lemma}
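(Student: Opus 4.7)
The plan is to unpack the definition of $\ol{\cdot}$ term by term and observe that all the pieces depending affinely on $f$ contribute equal amounts to the two sides, so they cancel out; hence, only the two quadratic terms matter. First I would split $\ol{f(\cdot)}$ as the sum of an affine functional
\[L(f)=2\varc\int_0^1 F_X^{-1}(t)f(t)\dt+(\vare-2\varc\BE{X})\int_0^1 f(t)\dt+\wa-\gamma-\vare\BE{X}-\varc\var{X}\]
and the quadratic part
\[Q(f)=\varc\Big(\int_0^1 f(t)\dt\Big)^2-\varc\int_0^1 f(t)^2\dt.\]
Since $L$ is affine, $\ep L(f_1)+(1-\ep)L(f_2)=L(\ep f_1+(1-\ep)f_2)$, so the claimed inequality reduces to $\ep Q(f_1)+(1-\ep)Q(f_2)\leq Q(\ep f_1+(1-\ep)f_2)$.

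Next I would set $A=\int_0^1 f_1\,dt$, $B=\int_0^1 f_2\,dt$, and use the elementary identities
\[\ep A^2+(1-\ep)B^2-(\ep A+(1-\ep)B)^2=\ep(1-\ep)(A-B)^2,\]
\[\ep\int_0^1 f_1^2\,dt+(1-\ep)\int_0^1 f_2^2\,dt-\int_0^1(\ep f_1+(1-\ep)f_2)^2\,dt=\ep(1-\ep)\int_0^1(f_1-f_2)^2\,dt,\]
which hold by direct algebra. Plugging these into $\ep Q(f_1)+(1-\ep)Q(f_2)-Q(\ep f_1+(1-\ep)f_2)$ yields, after canceling the factor $\varc\,\ep(1-\ep)>0$, the equivalent inequality
\[\Big(\int_0^1(f_1-f_2)\,dt\Big)^2\leq \int_0^1(f_1-f_2)^2\,dt.\]

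The main (and only mildly nontrivial) step is to recognize that this is exactly the Cauchy--Schwarz (equivalently, Jensen's) inequality applied to the function $g=f_1-f_2$ on the unit interval, comparing $\big(\int_0^1 g\cdot 1\,dt\big)^2$ with $\big(\int_0^1 g^2\,dt\big)\big(\int_0^1 1\,dt\big)$. This proves the stated concavity inequality. For the equality case, I would invoke the standard equality condition in Cauchy--Schwarz on $L^2([0,1])$: the identity holds iff $g=f_1-f_2$ is proportional to the constant function $1$, i.e.\ iff $f_1-f_2$ equals a constant (almost everywhere). Since $\varc>0$ and $0<\ep<1$, the extra factor $\varc\,\ep(1-\ep)$ does not affect this characterization, giving precisely the claimed ``if and only if'' statement. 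No obstacle of real substance is expected; the argument is a clean algebraic reduction to Jensen/Cauchy--Schwarz.
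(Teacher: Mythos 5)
Your proof is correct and is essentially the paper's argument: both isolate the quadratic piece $\varc\big(\int_0^1 f\big)^2-\varc\int_0^1 f^2$ and reduce the claim, after cancelling a factor $\varc\,\ep(1-\ep)>0$, to $\big(\int_0^1 (f_1-f_2)\,dt\big)^2\leq \int_0^1 (f_1-f_2)^2\,dt$ with equality iff $f_1-f_2$ is constant a.e.\ The only cosmetic difference is the last step: the paper rewrites the gap as the manifestly nonnegative ``variance'' $\varc\,\ep(1-\ep)\int_0^1\big(g-\int_0^1 g\big)^2\,dt$ with $g=f_1-f_2$, whereas you invoke Cauchy--Schwarz/Jensen by name; these are the same inequality with the same equality characterization.
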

\begin{proof} 
Clearly, 
\begin{align*}
&\quad\; \ol{\ep f_1(\cdot)+(1-\ep)f_2(\cdot)}- \ep\ol{f_1(\cdot)}-(1-\ep)\ol{f_2(\cdot)}\\
&=\varc\bigg[\Big(\int_{0}^{1}\ep f_1(t)+(1-\ep)f_2(t)\dt\Big)^2-\int_{0}^{1}\Big(\ep f_1(t)+(1-\ep)f_2(t)\Big)^2\dt\bigg]\\
&\qquad\quad-\varc\ep\bigg[\Big(\int_{0}^{1} f_1(t)\dt\Big)^2-\int_{0}^{1} f_1(t)^2\dt\bigg]-\varc(1-\ep)\bigg[\Big(\int_{0}^{1}f_2(t)\dt\Big)^2-\int_{0}^{1}f_2(t)^2\dt\bigg]\\
&=\varc\ep(1-\ep)\bigg[\int_{0}^{1}f_1(t)^2\dt-\Big(\int_{0}^{1}f_1(t)\dt\Big)^2+\int_{0}^{1}f_2(t)^2\dt-\Big(\int_{0}^{1}f_2(t)\dt\Big)^2\\
&\qquad\qquad\qquad+2\int_{0}^{1}f_1(t)\dt \int_{0}^{1}f_2(t)\dt-2\int_{0}^{1}f_1(t)f_2(t)\dt\bigg]\\
&=\varc\ep(1-\ep)\bigg[\int_{0}^{1}\big(f_1(t)-f_2(t)\big)^2\dt-\Big(\int_{0}^{1}\big(f_1(t)-f_2(t)\big)\dt\Big)^2\bigg]\\
&=\varc\ep(1-\ep)\bigg[\int_{0}^{1}\bigg(f_1(t)-f_2(t)-\int_{0}^{1}\big(f_1(s)-f_2(s)\big)\ds\bigg)^2\dt\bigg]\\
&\geq 0.
\end{align*}
The above inequality becomes an equation if and only if $$f_1(t)-f_2(t)=\int_{0}^{1}\big(f_1(s)-f_2(s)\big)\ds, \quad \mbox{ for a.e. $t\in[0, 1]$, }$$ which is equivalent to saying that $f_1-f_2$ is a constant function in $L^2([0, 1])$. This completes the proof. 
\end{proof}

Thanks to the above lemma and the concavity and monotonicity of $u$, we see that 
\begin{align}\label{convexity}
&\quad\; \ep\int_{[0, 1]}u\big(\ol{G_1(\cdot)}-G_1(1-p)\big) \pw(\ddp)+(1-\ep)\int_{[0, 1]}u\big(\ol{G_2(\cdot)}-G_2(1-p)\big) \pw(\ddp)\nn\\
&\leq \int_{[0, 1]}u\big(\ep\ol{G_1(\cdot)}-\ep G_1(1-p)+(1-\ep)\ol{G_2(\cdot)}-(1-\ep) G_2(1-p)\big) \pw(\ddp)\nn\\
&\leq \int_{[0, 1]}u\big(\ol{\ep G_1(\cdot)+(1-\ep)G_2(\cdot)}-(\ep G_1(1-p)+(1-\ep) G_2(1-p))\big) \pw(\ddp).
\end{align}
This shows that \eqref{opi2} is a concave optimization problem, which is expected to be easier to study than the non-concave optimization problem \eqref{opi}. 

Since \eqref{opi2} is a concave optimization problem, the calculus of variations method (or equivalently, the first-order condition) will provide not only a necessary and but also a sufficient optimality condition. Without concavity, we may not be able to deduce an equivalent optimality condition by the first order condition. 
\par
The next result is about the existence and uniqueness of the solution to the problem \eqref{opi2}. 
\begin{lemma}\label{lemexist1}
The problem \eqref{opi2} admits a unique optimal solution. 
\end{lemma}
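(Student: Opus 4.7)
My plan is to establish existence by a direct compactness--continuity argument on the feasible set $\setg$, and uniqueness by strengthening \citelem{concaveol} to strict concavity of the objective
\[J(G)\;:=\;\int_{[0,1]}u\big(\ol{G(\cdot)}-G(1-p)\big)\pw(\ddp).\]

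For existence, every $G\in\setg$ satisfies the uniform bound $0\leq G\leq \esup X$ from \eqref{boundedsetg}, together with the integral estimate $0\leq G(q)-G(p)\leq\int_p^q h(t)\dt$ for $0\leq p<q\leq 1$. Since $h\in L^1([0,1])$ by \eqref{h1}, the map $t\mapsto\int_0^t h$ is absolutely continuous, so $\setg$ is uniformly bounded and equicontinuous on $[0,1]$. Arzel\`a--Ascoli then yields relative compactness in $C([0,1])$. Take a maximizing sequence $\{G_n\}\subset\setg$ and extract a subsequence converging uniformly to some $G^*\in C([0,1])$. Passing to the limit in $0\leq G_n(q)-G_n(p)\leq\int_p^q h(t)\dt$ shows that $G^*$ is absolutely continuous with $0\leq (G^*)'\leq h$ a.e.\ by Lebesgue differentiation, while $G^*(0)=0$; thus $G^*\in\setg$. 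Continuity of $J$ under uniform convergence is then straightforward: the uniform bound on $G_n$ gives $\int_0^1 G_n\to\int_0^1 G^*$, $\int_0^1 G_n^2\to\int_0^1 (G^*)^2$ and $\int_0^1 F_X^{-1}G_n\to\int_0^1 F_X^{-1}G^*$, hence $\ol{G_n(\cdot)}\to\ol{G^*(\cdot)}$; combined with pointwise convergence $G_n(1-p)\to G^*(1-p)$, continuity of $u$, and dominated convergence against the finite measure $\pw$, we conclude $J(G_n)\to J(G^*)$, so $G^*$ attains the supremum.

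For uniqueness, take distinct $G_1,G_2\in\setg$. Because both vanish at $0$ and are continuous, if $G_1-G_2$ were a.e.\ equal to a constant in $L^2$, that constant would be $0$, contradicting $G_1\neq G_2$. \citelem{concaveol} therefore gives, for every $\ep\in(0,1)$,
\[\ep\,\ol{G_1(\cdot)}+(1-\ep)\ol{G_2(\cdot)}\;<\;\ol{\ep G_1(\cdot)+(1-\ep)G_2(\cdot)}.\]
Since $u$ is strictly increasing ($u'>0$) and $\pw$ is a probability measure, the second inequality in the chain \eqref{convexity} becomes strict; hence $J(\ep G_1+(1-\ep)G_2)>\ep J(G_1)+(1-\ep)J(G_2)$, so $J$ is strictly concave on $\setg$ and the maximizer is unique.

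The main delicate point is verifying that the uniform limit $G^*$ of functions in $\setg$ stays in $\setg$: the pointwise bound $0\leq G_n'\leq h$ is not directly stable under uniform convergence, but its integral form $0\leq G_n(q)-G_n(p)\leq\int_p^q h$ is, and Lebesgue differentiation then recovers both the absolute continuity of $G^*$ and the derivative bound in one stroke.
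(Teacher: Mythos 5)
Your proof is correct and follows essentially the same route as the paper's: Arzel\`a--Ascoli via the uniform bound \eqref{boundedsetg} and the integral equicontinuity estimate $|G_n(q)-G_n(p)|\leq\int_p^q h$ for existence, and the strict case of \citelem{concaveol} together with strict monotonicity of $u$ in the chain \eqref{convexity} for uniqueness. The only difference is cosmetic: you spell out the step the paper labels ``easy to verify''---that passing the derivative bound to the uniform limit goes through the difference-quotient (integral) form of the constraint and Lebesgue differentiation rather than the pointwise form---which is a useful clarification but not a different argument.
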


\begin{proof}
Thanks to \eqref{boundedsetg}, all the admissible solutions to the problem \eqref{opi2} are uniformly bounded, hence the optimal value is finite. 
Therefore, there exists a sequence of admissible solutions $\{G_{n}\}_n$ to the problem \eqref{opi2} such that 
\begin{align*} 
\int_{[0, 1]}u\big(\ol{G_{n}}-G_{n}(1-p)\big) \pw(\ddp)
>\sup_{G\in\setg}\; \int_{[0, 1]}u\big(\ol{G}-G(1-p)\big) \pw(\ddp)-\frac{1}{n}.
\end{align*}
For any $n$ and $0\leq p_{1}<p_{2}\leq 1$, we have 
\[|G_{n}(p_{2})-G_{n}(p_{1})|=\int_{p_{1}}^{p_{2}}G_{n}'(t)\dt\leq \int_{p_{1}}^{p_{2}}h(t)\dt, \]
so, by virtue of \eqref{h1}, the sequence $\{G_{n}\}_n$ is uniformly equicontinuous. By the Arzel\`a-Ascoli theorem, we conclude $\{G_{n}\}_n$ has a subsequence (still denoted by $\{G_{n}\}_n$) that converges uniformly to some $\barg$. It is easy to verify that $\barg\in\setg$. By the dominated convergence theorem, 
\begin{align*} 
\int_{[0, 1]}u\big(\ol{\barg}-\barg(1-p)\big) \pw(\ddp)
&=\lim_{n\to\infty}\int_{[0, 1]}u\big(\ol{G_{n}}-G_{n}(1-p)\big)\pw(\ddp)\\
&\geq\dsp\sup_{G\in\setg}\; \int_{[0, 1]}u\big(\ol{G}-G(1-p)\big) \pw(\ddp).
\end{align*}
This shows that $\barg$ is an optimal solution to \eqref{opi2}. 

To prove the uniqueness, we suppose, on the contrary, that the problem \eqref{opi2} has two different optimal solutions $\barg_1$ and $\barg_2$. 
Notice $\barg_1(0)-\barg_2(0)=0$ and $\barg_1-\barg_2$ is continuous but not identical to zero, so $\barg_1-\barg_2$ is not a constant function in $L^2([0, 1])$. Hence, by \citelem{concaveol}, 
\[\ep\ol{\barg_1}+(1-\ep)\ol{\barg_2}< \ol{\ep \barg_1+(1-\ep)\barg_2}, \] 
for any $0<\ep<1$. 
Because $u$ is strictly increasing, the last inequality in \eqref{convexity} is strict, giving 
\begin{multline*} 
\ep\int_{[0, 1]}u\big(\ol{\barg_1}-\barg_1(p)\big) \pw(\ddp)+(1-\ep)\int_{[0, 1]}u\big(\ol{\barg_2}-\barg_2(p)\big)\pw(\ddp)\nn\\
<\int_{[0, 1]}u\big(\ol{\ep \barg_1+(1-\ep)\barg_2}-(\ep \barg_1(1-p)+(1-\ep) \barg_2(1-p))\big) \pw(\ddp). 
\end{multline*}
This clearly contradicts the optimality of $\barg_1$ and $\barg_2$ and confirms the uniqueness. 
\end{proof}

\par

%%%%%%%%%%%%%%%%%%%%%%%%%%%%%%%%%%%%%%%%%%%%%%%% 
\section{Optimal solution.}\label{solution}
%%%%%%%%%%%%%%%%%%%%%%%%%%%%%%%%%%%%%%%%%%%%%%%%
As the problem \eqref{opi2} is a concave optimization problem, we can apply the calculus of variations method to solve it. This method leads to the following result, which completely characterizes the unique optimal solution to the problem \eqref{opi2}. 
\begin{lemma}[Optimality condition I]\label{op1}
Suppose $\barg\in\setq$. Then $\barg$ is the optimal solution to the problem \eqref{opi2} if and only if it satisfies 
\begin{align} \label{optimalcondition}
&\int_{[0, 1]}\Big[ \int_{0}^{1}\Big( 2\varc \int_{0}^{1}\barg(t)\dt -2\varc \barg(t)+2\varc F_{X}^{-1}(t)+\vare-2\varc\BE{X} \Big)(G(t)-\barg(t))\dt \nn\\
&\quad\qquad\quad- (G(1-p)-\barg(1-p))
\Big]u'\big(\ol{\barg(\cdot)}-\barg(1-p)\big) \pw(\ddp)\leq 0 \quad \mbox{for any $G\in\setq$.}\quad
\end{align} 
\end{lemma}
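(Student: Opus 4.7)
The plan is to exploit the concavity already established: by \citelem{concaveol} combined with the concavity and monotonicity of $u$ (cf.\ \eqref{convexity}), the objective functional $J(G) := \int_{[0,1]} u\big(\ol{G(\cdot)} - G(1-p)\big)\pw(\ddp)$ is concave on the convex set $\setg$. For a concave functional on a convex set, $\barg$ is optimal if and only if the one-sided directional derivative at $\barg$ along every admissible direction is nonpositive. Convexity of $\setg$ gives $G_\ep := \barg + \ep(G - \barg) = (1-\ep)\barg + \ep G \in \setg$ for every $G \in \setg$ and $\ep \in [0,1]$, so the optimality of $\barg$ is equivalent to $\limsup_{\ep \to 0^+} \ep^{-1}\big(J(G_\ep) - J(\barg)\big) \leq 0$ for every admissible $G$.

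The second step is to compute this directional derivative explicitly and match it with \eqref{optimalcondition}. Since $\ol{\cdot}$ is quadratic in its argument, the map $\ep \mapsto \ol{G_\ep(\cdot)}$ is a polynomial in $\ep$ of degree $2$, and term-by-term differentiation of its four defining pieces gives, at $\ep = 0$,
\begin{align*}
\frac{d}{d\ep}\ol{G_\ep(\cdot)}\bigg|_{\ep=0} = \int_0^1 \Big(2\varc \int_0^1 \barg(s)\ds - 2\varc \barg(t) + 2\varc F_X^{-1}(t) + \vare - 2\varc \BE{X}\Big)\big(G(t) - \barg(t)\big)\dt.
\end{align*}
Combined with the chain rule through the outer $u$ and the pointwise identity $\frac{d}{d\ep}G_\ep(1-p)\big|_{\ep = 0} = G(1-p) - \barg(1-p)$, this reproduces precisely the bracketed expression in \eqref{optimalcondition} under the integral against $\pw$. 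To pass $\frac{d}{d\ep}$ inside the $\pw$-integral, I would use the uniform bound \eqref{boundedsetg} ($0 \le G_\ep \le \esup X$) to keep $\ol{G_\ep} - G_\ep(1-p)$ inside a fixed compact interval, on which the continuous function $u'$ is bounded; together with the fact that $\ol{G_\ep(\cdot)} - \ol{\barg(\cdot)}$ is $O(\ep)$ uniformly in $p$, this yields a dominating function and the dominated convergence theorem applies.

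Necessity is then immediate because optimality of $\barg$ forces $J(G_\ep) \leq J(\barg)$ for all $\ep \in [0,1]$, so the right-derivative at $0$ is nonpositive. For sufficiency, the concavity of $\ep \mapsto J(G_\ep)$ on $[0,1]$ makes its right-derivative nonincreasing, so a nonpositive derivative at $\ep = 0$ forces $J(G_1) - J(G_0) \leq 0$, that is, $J(G) \leq J(\barg)$. The only genuine technical point is justifying the differentiation under the $\pw$-integral for a possibly singular measure $\pw$, which is handled by the uniform bound on $G_\ep$ and the continuity of $u'$; no deeper obstacle arises, since plain concavity (strict concavity was only invoked in \citelem{lemexist1} for uniqueness) does all the heavy lifting to turn the first-order condition into a genuine equivalence.
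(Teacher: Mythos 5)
Your proof is correct and follows essentially the same route as the paper: it exploits the concavity already established in \citelem{concaveol} and \eqref{convexity}, forms the directional derivative of the objective along $G_\ep=(1-\ep)\barg+\ep G$, and identifies it with the integrand in \eqref{optimalcondition} by differentiating under the $\pw$-integral. The only (cosmetic) differences are that you use dominated convergence in both directions, with an explicit dominating function coming from \eqref{boundedsetg} and the boundedness of $u'$ on compacts (the paper invokes Fatou's lemma for the necessity direction without spelling out the bound), and you close the sufficiency step via the monotonicity of the concave difference quotient $\ep\mapsto\ep^{-1}(J(G_\ep)-J(\barg))$ rather than the paper's contradiction argument --- both variants rest on the same concavity and are equally valid.
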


\begin{proof}
Suppose $\barg$ is the optimal solution to the problem \eqref{opi2}. For any $G\in\setq$, $\ep\in(0, 1)$, define 
\[G_{\ep}(p)=\barg(p)+\ep (G(p)-\barg(p)), \quad p\in[0, 1].\]
Then $G_{\ep}\in\setg$. 
Because $\barg$ is the optimal solution to the problem \eqref{opi2}, applying Fatou's lemma, we get 
\begin{align*} 
0 &\geq \liminf_{\ep\to 0+}\frac{1}{\ep}\bigg[\int_{[0, 1]}u\big(\ol{G_{\ep}(\cdot)}-G_{\ep}(1-p)\big)-u\big(\ol{\barg(\cdot)}-\barg(1-p)\big) \pw(\ddp)\bigg]\\
&\geq \int_{[0, 1]} \liminf_{\ep\to 0+}\frac{1}{\ep}\bigg[u\big(\ol{G_{\ep}(\cdot)}-G_{\ep}(1-p)\big)-u\big(\ol{\barg(\cdot)}-\barg(1-p)\big)\bigg] \pw(\ddp) \\
&=\int_{[0, 1]} \Big(u\big(\ol{G_{\ep}(\cdot)}-G_{\ep}(1-p)\big)\Big)'\Big|_{\ep=0} \pw(\ddp).
\end{align*} 
A simple calculation shows that the last integrand is equal to the integrand of \eqref{optimalcondition}. So we proved \eqref{optimalcondition}. 
\par
On the other hand side, suppose $\barg\in\setq$ and it satisfies \eqref{optimalcondition} but is not optimal to the problem \eqref{opi2}. Then there exist a function $G_{1}\in\setq$ and a constant $c>0$ such that 
\[\int_{[0, 1]}u\big(\ol{G_{1}(\cdot)}-G_{1}(1-p)\big) \pw(\ddp)>\int_{[0, 1]}u\big(\ol{\barg(\cdot)}-\barg(1-p)\big) \pw(\ddp)+c.\]
For $\ep\in(0, 1)$, let 
\[G_{\ep}(p)=\ep G_1(p)+(1-\ep)\barg(p), \quad p\in[0, 1].\]
By virtue of \eqref{convexity}, we have 
\begin{align*} 
&\quad\;\int_{[0, 1]}u\big(\ol{G_{\ep}(\cdot)}-G_{\ep}(1-p)\big) \pw(\ddp) \\
&\geq \ep\int_{[0, 1]}u\big(\ol{G_{1}(\cdot)}-G_{1}(1-p)\big) \pw(\ddp)+(1-\ep)\int_{[0, 1]}u\big(\ol{\barg(\cdot)}-\barg(1-p)\big) \pw(\ddp)\\
&\geq \int_{[0, 1]}u\big(\ol{\barg(\cdot)}-\barg(1-p)\big) \pw(\ddp)+c\ep, 
\end{align*} 
so
\begin{align*} 
\liminf_{\ep\to 0+}\frac{1}{\ep}\Big[\int_{[0, 1]}u\big(\ol{G_{\ep}(\cdot)}-G_{\ep}(1-p)\big)-u\big(\ol{\barg(\cdot)}-\barg(1-p)\big) \pw(\ddp)\Big]\geq c>0.
\end{align*} 
But the dominated convergence theorem and \eqref{optimalcondition} lead to 
\begin{align*} 
&\quad\;\liminf_{\ep\to 0+}\frac{1}{\ep}\Big[\int_{[0, 1]}u\big(\ol{G_{\ep}(\cdot)}-G_{\ep}(1-p)\big)-u\big(\ol{\barg(\cdot)}-\barg(1-p)\big) \pw(\ddp)\Big]\\
&=\int_{0}^{1} \liminf_{\ep\to 0+}\frac{1}{\ep}\Big[u\big(\ol{G_{\ep}(\cdot)}-G_{\ep}(1-p)\big)-u\big(\ol{\barg(\cdot)}-\barg(1-p)\big)\Big] \pw(\ddp) \\
&=\int_{[0, 1]} \Big(u\big(\ol{G_{\ep}(\cdot)}-G_{\ep}(1-p)\big)\Big)'\Big|_{\ep=0} \pw(\ddp)\\
&\leq 0, 
\end{align*}
contradicting the above inequality. This completes the proof. 
\end{proof}
\par
By this result, we see that solving the problem \eqref{opi2} reduces to the problem of finding a $\barq\in\setq$ that satisfies the condition \eqref{optimalcondition}. But one cannot find such a $\barq$ easily from \eqref{optimalcondition}, because it requires to compare $\barq$ with all the quantiles in $\setq$. Intuitively speaking, this does not reduce the difficulty of solving the problem \eqref{opi2}. 
\par

Our next goal is to find an equivalent condition to \eqref{optimalcondition} that can be easily verified. To this end, we define a function 
\begin{align*}
\opl(p)=\frac{\dsp\int_{(1-p, 1]}u'\big(\ol{\barg(\cdot)}-\barg(1-t)\big) \pw(\dt)}{\dsp\int_{[0, 1]}u'\big(\ol{\barg(\cdot)}-\barg(1-t)\big) \pw(\dt)}, \quad p\in[0, 1].
\end{align*} 
Thanks to $\pw(\{0\})=0$, one can see $\opl$ is a probability distribution function. 
Also, we define 
\begin{align*} 
\oplb(p)&=\Big(2\varc \int_{0}^{1}\barg(t) \dt+\vare-2\varc\BE{X}\Big)(p-1)\\
&\qquad+2\varc\int_{p}^{1} \big(\barg(t)-F_{X}^{-1}(t)\big)\dt+1, \quad p\in[0, 1].
\end{align*} 
Then it is easy to verify $\oplb(0)=1-\vare$ and $\oplb(1)=1$. 

In terms of these new notations, the inequality in \eqref{optimalcondition} can be written as 
\begin{align*} 
\int_{[0, 1]}\Big[\int_{0}^{1}\oplb'(t) (G(t)-\barg(t)) \dt-(G(1-p)-\barg(1-p))\Big]\dd \big(1-\opl(1-p)\big)\leq 0, 
\end{align*} 
that is, 
\begin{align*} 
\int_{0}^{1}\oplb'(t) (G(t)-\barg(t)) \dt-\int_{[0, 1]}\big(G(1-p)-\barg(1-p)\big)\dd \big(1-\opl(1-p)\big)\leq 0.
\end{align*} 
Applying integration by parts to the second integral, the above becomes 
\begin{align*} 
\int_{0}^{1}\oplb'(t)\big(G(t)-\barg(t)\big) \dt
+\int_{0}^{1}(\opl(p)-1) \big(G'(p)-\barg'(p)\big)\ddp\leq 0, 
\end{align*} 
thanks to $\opl(1)=1$, $G(0)=0$ and $\barg(0)=0$. 
By virtue of $G(0)=0$, $\barg(0)=0$ and $\oplb(1)=1$, and applying integration by parts to the first integral in above, it becomes
\begin{align*} 
\int_{0}^{1}(1-\oplb(t))\big(G'(t)-\barg'(t)\big)\dt
+\int_{0}^{1} (\opl(p)-1) \big(G'(p)-\barg'(p)\big)\ddp\leq 0, 
\end{align*} 
or 
\begin{align} \label{optimalcondition1-1}
\int_{0}^{1} \big(\opl(p)-\oplb(p)\big)\big(G'(p)-\barg'(p)\big)\ddp\leq 0. 
\end{align} 
Because $G', \;\barg'\in[0, h(p)]$, we conclude that $\barg$ satisfies 
\begin{align} \label{optimalcondition2}
\begin{cases}
\barg(0)=0;\\
\barg'(p)=h(p), &\quad\text{if } \oplb(p)-\opl(p)<0;\\
\barg'(p)\in[0, h(p)], &\quad\text{if } \oplb(p)-\opl(p)=0;\\
\barg'(p)=0, &\quad\text{if } \oplb(p)-\opl(p)>0, 
\end{cases}\; \mbox{ for a.e. $p\in[0, 1]$.}
\end{align} 
The preceding arguments are reversible, so \eqref{optimalcondition2} is equivalent to \eqref{optimalcondition}. 
The key point is that the condition \eqref{optimalcondition2} is much easier to verify than \eqref{optimalcondition} since it only depends on $\barg$ itself. 
\par
Although \eqref{optimalcondition2} is easier to verify, it is still uneasy to find or compute $\barg$ from it. We now express the condition \eqref{optimalcondition2} through an ordinary integro-differential equation by virtue of the following technical lemma. 
This OIDE can be further reduced to an ordinary differential equation later in some special cases. 
\begin{lemma}[Lemma 4.2, \cite{X21}]\label{threecases}
Suppose $a, b, c, d$ are real quantities with $b\leq c$. Then 
\[\min\{\max\{a-c, \; d\}, \;a-b\}=0\] if and only if 
\begin{align*}
\begin{cases}
a=c, &\text{if } d<0;\\
a\in[b, c], &\text{if } d=0;\\
a=b, &\text{if } d>0.
\end{cases}
\end{align*}
\end{lemma}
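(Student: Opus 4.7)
The plan is to prove this purely algebraic equivalence by a direct case split on the sign of $d$, exploiting two structural facts: $\max\{a-c, d\} \geq d$, and, because $b \leq c$, the inequality $a - b \geq a - c$. Together these show that whether the outer $\min$ vanishes is governed entirely by the sign of $d$ and by the position of $a$ relative to $b$ and $c$. In each case I would derive an iff chain, so that the forward and reverse implications fall out in a single sweep rather than being argued separately.

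For $d < 0$: if $a > c$, both arguments of the outer $\min$ are strictly positive; if $a < c$, then both $a-c$ and $d$ are negative, so the inner $\max$ is strictly negative and so is the outer $\min$. Hence $a = c$ is forced, and conversely $a = c$ gives $\max\{0, d\} = 0$ together with $a - b \geq 0$, yielding zero. For $d = 0$: the outer $\min$ is nonnegative iff $a \geq b$, and assuming this, $\max\{a-c, 0\} = 0$ iff $a \leq c$, while $a > c$ would make both arguments strictly positive. Hence the condition reduces to $a \in [b, c]$. For $d > 0$: $\max\{a-c, d\} \geq d > 0$ always, so the min vanishes iff the other argument $a - b = 0$, i.e.\ $a = b$; conversely $a = b \leq c$ gives $a - c \leq 0 < d$, so $\max\{a-c,d\} = d$ and the outer $\min$ equals $0$.

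The proof is elementary and I expect no real obstacle; the only place that demands a touch of care is $d = 0$, where both $b \leq a$ and $a \leq c$ must be used simultaneously, the former to keep $a - b \geq 0$ and the latter to force the inner max to be exactly zero. Concretely, the writeup would be just three short iff chains (one per case), each two or three lines long, with a final sentence observing that the three cases together yield the stated piecewise description.
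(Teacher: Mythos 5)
Your case split on the sign of $d$ is correct in every branch, and the two structural facts you lean on—$\max\{a-c,d\}\geq d$ and $a-b\geq a-c$ (from $b\leq c$)—are exactly what make the argument close. Note that the present paper does not prove this lemma itself; it cites it from reference \cite{X21}. Your direct, elementary verification is the natural proof one would give for such a purely algebraic identity, and there is no gap: the $d<0$ case correctly eliminates both $a>c$ and $a<c$, the $d=0$ case correctly identifies $a\geq b$ as necessary for nonnegativity and $a\leq c$ as necessary to kill the inner max, and the $d>0$ case correctly forces $a-b=0$ since the inner max is bounded below by $d>0$.
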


\begin{lemma}[Optimality condition II] \label{op2}
Suppose $\barg: [0, 1]\to\R$ is an absolutely continuous function. Then $\barg$ is the optimal solution to the problem \eqref{opi2} if and only if it satisfies $\barg(0)=0$ and the following OIDE
\begin{align} \label{vi001}
\min\Big\{\max\big\{\barg'(p)-h(p), \; \oplb(p)-\opl(p)\big\}, \;\barg'(p)\Big\}=0, \quad\mbox{a.e.}\;p\in[0, 1], 
\end{align} 
where
\begin{align} \label{opldef}
\opl(p)=\frac{\dsp\int_{(1-p, 1]}u'\big(\ol{\barg(\cdot)}-\barg(1-t)\big) \pw(\dt)}{\dsp\int_{[0, 1]}u'\big(\ol{\barg(\cdot)}-\barg(1-t)\big) \pw(\dt)}, 
\end{align}
and
\begin{align} \label{oplbdef}
\oplb(p)&=\Big(2\varc \int_{0}^{1}\barg(t) \dt+\vare-2\varc\BE{X}\Big)(p-1)+2\varc\int_{p}^{1} \big(\barg(t)-F_{X}^{-1}(t)\big)\dt+1.
\end{align} 
\end{lemma}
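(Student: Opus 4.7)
The plan is to chain three pieces: the variational inequality from Optimality Condition I (\citelem{op1}), an integration-by-parts reduction that converts it into a pointwise condition on $\barg'$, and the algebraic identity \citelem{threecases} that repackages that pointwise condition as the min--max OIDE \eqref{vi001}. Much of the reduction already appears in the text leading up to the statement; the proof's role is to carry it out cleanly in both directions and recognize each branch of the trichotomy as an instance of \citelem{threecases}.

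For the forward direction, I would assume $\barg$ is optimal so that \eqref{optimalcondition} holds for every $G\in\setg$, unwind the Choquet integral using the definitions \eqref{opldef}--\eqref{oplbdef} to bring the inequality into the Stieltjes form displayed in the text, and then apply integration by parts twice (using $\opl(1)=\oplb(1)=1$ and $G(0)=\barg(0)=0$) to collapse it to
\[
\int_0^1\big(\opl(p)-\oplb(p)\big)\big(G'(p)-\barg'(p)\big)\dd p\le 0\quad\text{for every }G\in\setg.
\]
Because $G'$ ranges freely over $\{g\in L^\infty([0,1]): 0\le g\le h\text{ a.e.}\}$, localizing the choice $G'=h\,\id{A}+\barg'\,\id{A^c}$ on subsets of $\{\oplb<\opl\}$ and $G'=\barg'\,\id{A^c}$ on subsets of $\{\oplb>\opl\}$ forces the trichotomy \eqref{optimalcondition2} almost everywhere. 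Then \citelem{threecases} with $a=\barg'(p)$, $b=0$, $c=h(p)$ (the hypothesis $b\le c$ coming from \eqref{def:h}), and $d=\oplb(p)-\opl(p)$ converts \eqref{optimalcondition2} precisely into the OIDE \eqref{vi001}. The reverse direction runs the same chain backwards: a $\barg\in\setac([0,1])$ satisfying \eqref{vi001} has $\barg'(p)\in[0,h(p)]$ a.e. by the three branches of the OIDE, so together with $\barg(0)=0$ it lies in $\setg$; reversing the integrations by parts shows \eqref{optimalcondition} holds, and \citelem{op1} delivers optimality.

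The main obstacles are bookkeeping rather than conceptual. First, one must verify that $\opl$ and $\oplb$ are genuinely well-defined for any bounded absolutely continuous $\barg$: boundedness of $\barg$ (inherited from \eqref{boundedsetg} once we know $\barg\in\setg$) makes the integrand $u'(\ol{\barg(\cdot)}-\barg(1-t))$ continuous and bounded below by a positive constant, so the ratio in \eqref{opldef} is a bona fide distribution function and the normalization is harmless. Second, and more delicate, is the localization step: one needs the three level sets of $\oplb-\opl$ to be measurable (they are, since $\oplb$ and $\opl$ are continuous in $p$) and must invoke Lebesgue's differentiation theorem to pass from an integral inequality holding for all admissible perturbations to the pointwise trichotomy almost everywhere. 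Once this measure-theoretic step is executed, the equivalence with \eqref{vi001} is immediate from \citelem{threecases}.
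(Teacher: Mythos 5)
Your proposal is correct and follows essentially the same route as the paper. The paper's formal proof of \citelem{op2} is a one-liner that points back to \eqref{optimalcondition2} and \citelem{threecases}; the actual work (the two integrations by parts reducing the variational inequality to \eqref{optimalcondition1-1}, the passage from that integral inequality to the pointwise trichotomy, and its reversibility) is carried out in the text immediately preceding the lemma, and your outline reproduces it faithfully. You make explicit the localization step $G'=h\,\id{A}+\barg'\,\id{A^c}$ on subsets of $\{\oplb<\opl\}$ (and its analogue on $\{\oplb>\opl\}$) that the paper compresses into ``Because $G',\,\barg'\in[0,h(p)]$, we conclude \eqref{optimalcondition2}'', and your identification of $(a,b,c,d)=(\barg'(p),0,h(p),\oplb(p)-\opl(p))$ in \citelem{threecases} is exactly what the paper uses. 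One minor quibble: the pointwise conclusion does not need the Lebesgue differentiation theorem, only the elementary fact that if $\int_A f\leq 0$ for every measurable $A$ in a set then $f\leq 0$ a.e.\ there; and the measurability of the level sets of $\oplb-\opl$ is immediate since both are measurable (indeed $\oplb$ is continuous and $\opl$ is monotone, hence Borel), so that caveat is unnecessary.
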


\begin{proof}
This is an immediate consequence of the optimality condition \eqref{optimalcondition2} and \citelem{threecases}. 
\end{proof}

\par
There are three unknown functions $\barg$, $\opl$ and $\oplb$ in \eqref{vi001}, so it is not easy to solve. 
We want to further simplify \eqref{vi001}. 
To this end, define an operator 
\begin{align}
\dsp\oplc{f(\cdot)}(p)=\frac{\int_{(1-p, 1]}u'\big(\ol{F_{X}^{-1}(\cdot)+\frac{1}{2\varc}( f(0)- f(\cdot))}
-F_{X}^{-1}(1-t)-\tfrac{1}{2\varc}( f(0)- f(1-t))\big) \pw(\dt)}{\int_{[0, 1]}u'\big(\ol{F_{X}^{-1}(\cdot)+\frac{1}{2\varc}( f(0)-f(\cdot))}-F_{X}^{-1}(1-t)-\tfrac{1}{2\varc}( f(0)- f(1-t))\big) \pw(\dt)}.
\end{align}
Note that $\oplc{f(\cdot)}$ can be regarded as a probability distribution function which may be discontinuous at the mass or singular points of $\pw$. If $\pw$ has a density, then so is $\oplc{f(\cdot)}$.

Now introduce the following OIDE of one unknown $\oplb$: 
\begin{equation} \label{vi002}
\begin{cases}
\min\Big\{\max\big\{-\oplb''(p), \; \oplb(p)-\oplc{\oplb'(\cdot)}(p)\big\}, \; 2\varc h(p)-\oplb''(p)
\Big\}=0, \hfill\quad\mbox{a.e.}\;p\in[0, 1], \bigskip\\
\oplb(0)=1-\vare, \quad \oplb(1)=1. 
\end{cases}
\end{equation} 
By virtue of \citelem{op2}, we can link it to the optimal solution to the problem \eqref{opi2}. The following simple technical result will be critical and used frequently in this process. 
\begin{lemma}[Lemma 4.4, \cite{X21}]\label{tech1}
Suppose $a, b, c$ are real numbers. Then $\min\{\max\{a, b\}, \;c\}=0$, if and only if $\min\{\max\{ak, b\ell\}, cm\}=0$ for any $k, \ell, m>0$. 
\end{lemma}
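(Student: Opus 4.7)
The statement is a purely algebraic triviality once one realises that the equation $\min\{\max\{a,b\},c\}=0$ encodes nothing more than a collection of \emph{sign conditions} on $a$, $b$, $c$, and each such sign condition is invariant under multiplication by a positive constant. My plan is simply to make this observation precise.

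First, I would unpack the condition in terms of signs. For any two reals $x,y$, the equality $\min\{x,y\}=0$ holds iff $x\geq 0$, $y\geq 0$, and at least one of $x,y$ vanishes. Taking $x=\max\{a,b\}$ and $y=c$, one obtains that $\min\{\max\{a,b\},c\}=0$ is equivalent to the disjunction of the following two cases: (A) $c\geq 0$ and $\max\{a,b\}=0$, i.e.\ $a\leq 0$, $b\leq 0$ with at least one of them equal to $0$; or (B) $c=0$ and $\max\{a,b\}\geq 0$, i.e.\ $a\geq 0$ or $b\geq 0$. In particular, the condition is a Boolean combination of atomic statements of the form ``$a\geq 0$'', ``$a\leq 0$'', ``$a=0$'' and the analogous ones for $b$ and for $c$ separately.

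Second, each such atomic sign statement is stable under multiplication by a positive scalar: for $k>0$ we have $a\geq 0\Leftrightarrow ak\geq 0$, and likewise for $\leq 0$ and for $=0$. Because $a$, $b$, $c$ are rescaled by \emph{separate} positive factors $k,\ell,m$, the entire disjunction (A)--(B) in terms of the triple $(a,b,c)$ is logically equivalent to the same disjunction in terms of $(ak,b\ell,cm)$. Hence $\min\{\max\{a,b\},c\}=0$ holds if and only if $\min\{\max\{ak,b\ell\},cm\}=0$ holds for every $k,\ell,m>0$. The ``if'' direction of the lemma follows by specialising to $k=\ell=m=1$; the ``only if'' direction is exactly what the previous sentence asserts.

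There is really no obstacle here. The only mild bookkeeping concerns the boundary cases where one of $a,b,c$ equals zero, but these are harmless precisely because the three sign atoms $\{\geq 0,\leq 0,=0\}$ are individually preserved by positive scaling. No continuity, convexity, or any structure from the insurance model is used; the lemma is a purely logical fact about $\min$/$\max$ that will later be convenient in order to apply the positive multipliers furnished by $u'$ and by $\pw$-weights when converting between \eqref{vi001} and \eqref{vi002}.
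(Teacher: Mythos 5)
Your proof is correct. The paper itself offers no proof of this lemma---it is imported verbatim as Lemma 4.4 from Xu \cite{X21}---so there is no in-paper argument to compare against. Your observation that $\min\{\max\{a,b\},c\}=0$ unpacks into a Boolean combination of the sign atoms $a\geq 0$, $a\leq 0$, $b\geq 0$, $b\leq 0$, $c\geq 0$, $c\leq 0$, each of which is invariant under multiplication by an \emph{independent} positive constant, is precisely the content of the result; and specialising to $k=\ell=m=1$ disposes of the ``if'' direction in one line. The only point worth stressing (which you do handle, implicitly) is that it matters that $a$, $b$, $c$ are rescaled separately by $k$, $\ell$, $m$: this is what lets each atom be treated independently, and it is exactly the freedom the paper exploits in \citethm{main1} when it multiplies the three branches of the OIDE \eqref{vi001} by the different positive factors $1$, $1$, and $2\varc$ (and by the normalising denominator hidden in $\opl$) to arrive at \eqref{vi002}.
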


\begin{thm}[Optimal solution]\label{main1}
We have the following assertions. 
\begin{enumerate}[(1).]
\item 
If $\barg$ is the optimal solution to the problem \eqref{opi2}. Then
\begin{align} 
\oplb(p)&=\Big(2\varc \int_{0}^{1}\barg(t) \dt+\vare-2\varc\BE{X}\Big)(p-1)+2\varc\int_{p}^{1} \big(\barg(t)-F_{X}^{-1}(t)\big)\dt+1\label{oplbdef3}
\end{align} 
is a solution to \eqref{vi002} in $ C^{2-}([0, 1])$. 

\item 
If $\oplb$ is a solution to \eqref{vi002} in $ C^{2-}([0, 1])$. Then 
\begin{align}\label{bargdef1}
\barg(p)&=F_{X}^{-1}(p)+\frac{1}{2\varc}(\oplb'(0)-\oplb'(p)), 
\end{align}
and
\begin{align}
\overline{R}(x)&=\barg(F_X(x)) 
\end{align}
are optimal solutions to the problems \eqref{opi2} and \eqref{opi}, respectively. 
\end{enumerate}
As a consequence, \eqref{vi002} admits a unique solution in $ C^{2-}([0, 1])$. 
\end{thm}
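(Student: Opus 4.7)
The plan is to build a bijection between optimal quantiles $\barg$ for \eqref{opi2} and functions $\oplb\in C^{2-}([0,1])$ solving \eqref{vi002}, and then push existence and uniqueness across this bijection using \citelem{lemexist1}. The bridge is a pair of pointwise identities obtained by twice differentiating the right-hand side of \eqref{oplbdef3}: writing $A:=2\varc\int_0^1\barg(t)\dt+\vare-2\varc\BE{X}$, one reads off $\oplb'(p)=A-2\varc(\barg(p)-F_X^{-1}(p))$ and $\oplb''(p)=-2\varc(\barg'(p)-h(p))$. Evaluating the first at $p=0$ shows $\oplb'(0)=A$, so \eqref{oplbdef3} and \eqref{bargdef1} are inverse to each other. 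The second identity maps the three regimes $\{\barg'=0\}$, $\{0\le\barg'\le h\}$, $\{\barg'=h\}$ to $\{\oplb''=2\varc h\}$, $\{0\le\oplb''\le 2\varc h\}$, $\{\oplb''=0\}$, exactly matching the three-branch structure of \eqref{vi002}. A direct substitution of \eqref{bargdef1} into \eqref{opldef} then confirms that $\opl(p)=\oplc{\oplb'(\cdot)}(p)$, identifying the nonlocal term of \eqref{vi002} with the probability distribution appearing in \citelem{op2}.

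For assertion (1), I would start from the optimality of $\barg$, invoke \citelem{op2} to obtain \eqref{vi001}, and rewrite each entry via the identities above. The outcome is $\min\{\max\{-\oplb''/(2\varc),\,\oplb-\oplc{\oplb'(\cdot)}\},\,h-\oplb''/(2\varc)\}=0$, which \citelem{tech1}, applied with positive multipliers $k=m=2\varc$ and $\ell=1$, rescales into \eqref{vi002}. The boundary values $\oplb(0)=1-\vare$ and $\oplb(1)=1$ fall out by direct evaluation of \eqref{oplbdef3} using $\barg(0)=F_X^{-1}(0)=0$, and $\oplb\in C^{2-}([0,1])$ because $\oplb'$ is a linear combination of the absolutely continuous functions $\barg$ and $F_X^{-1}$. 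For assertion (2), I define $\barg$ from $\oplb$ by \eqref{bargdef1}. The inequalities $0\le\oplb''\le 2\varc h$ implicit in \eqref{vi002} deliver $0\le\barg'\le h$ a.e., together with $\barg(0)=0$, so $\barg\in\setg$. Integrating $\oplb'=\oplb'(0)-2\varc(\barg-F_X^{-1})$ and using $\oplb(0)=1-\vare$ forces $\oplb'(0)=A$, so $\oplb$ coincides with the expression in \eqref{oplbdef3} built from this $\barg$ and, in particular, $\opl=\oplc{\oplb'(\cdot)}$ as above. Reversing the rescaling of \citelem{tech1} turns \eqref{vi002} back into \eqref{vi001}, \citelem{op2} delivers optimality of $\barg$ for \eqref{opi2}, and the claim that $\overline{R}(x)=\barg(F_X(x))$ solves \eqref{opi} follows from \eqref{RG} and the change of variables developed in \citesec{quantileproblem}.

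The final uniqueness statement then follows by combining (1), (2) with \citelem{lemexist1}: any two $C^{2-}$ solutions of \eqref{vi002} would, via (2), produce the same unique optimal $\barg$, and the recovery formula \eqref{oplbdef3} in terms of $\barg$ forces the two $\oplb$'s to coincide. The step I expect to require the most care is the verification that the substitution $\barg(\cdot)=F_X^{-1}(\cdot)+\tfrac{1}{2\varc}(\oplb'(0)-\oplb'(\cdot))$ simultaneously converts the inner argument $\ol{\barg(\cdot)}$ of $u'$ \emph{and} the pointwise term $\barg(1-t)$ into precisely the functional shape appearing inside $\oplc{\oplb'(\cdot)}$, so that the identification $\opl=\oplc{\oplb'(\cdot)}$ holds identically on $[0,1]$; once this identification and the algebraic identities of the first paragraph are in hand, everything else is a mechanical transport through \citelem{op2} and \citelem{tech1}.
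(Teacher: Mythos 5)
Your proposal is correct and follows essentially the same route as the paper: the identity $\oplb''(p)=-2\varc(\barg'(p)-h(p))$ bridging \eqref{oplbdef3} and \eqref{bargdef1}, the direct substitution showing $\opl=\oplc{\oplb'(\cdot)}$, the back-and-forth between \eqref{vi001} and \eqref{vi002} via \citelem{tech1}, and the transport of uniqueness through \citelem{lemexist1} are exactly the ingredients the paper uses. The one spot where you are slightly more explicit than the paper is the observation that \citelem{threecases} forces $0\le\oplb''\le2\varc h$ in all three regimes of \eqref{vi002}, so that the recovered $\barg$ lands in $\setg$ and not merely in the class of absolutely continuous functions; the paper leaves this implicit when it invokes \citelem{op2}, so your extra verification is a small but genuine improvement in rigor rather than a departure of method.
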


\begin{proof}
\begin{enumerate}[(1).]
\item 
Since $\barg$ is absolutely continuous on $[0, 1]$, by the definition \eqref{oplbdef3} we have $\oplb\in C^{2-}([0, 1])$ and $\oplb(0)=1-\vare$ and $\oplb(1)=1$.
Differentiating \eqref{oplbdef3} and rearranging the terms, we get 
\[\barg(p)=F_{X}^{-1}(p)+\int_{0}^{1}\barg(t) \dt-\BE{X}+\frac{1}{2\varc}(\vare-\oplb'(p)).\]
Thanks to $\barg(0)=0$ and $F_{X}^{-1}(0)=0$, it follows
\[\barg(p)=F_{X}^{-1}(p)+\frac{1}{2\varc}(\oplb'(0)-\oplb'(p)).\]
Setting 
\begin{align}\label{opldef3}
\dsp\opl(p)&=\frac{\dsp\int_{(1-p, 1]}u'\big(\ol{\barg(\cdot)}-\barg(1-t)\big) \pw(\dt)}{\dsp\int_{[0, 1]}u'\big(\ol{\barg(\cdot)}-\barg(1-t)\big) \pw(\dt)}, 
\end{align} 
it yields 
\begin{align*} 
\dsp\opl(p) 
&=\frac{ \int_{(1-p, 1]} u'\big(\ol{F_{X}^{-1}(\cdot)+\frac{1}{2\varc}(\oplb'(0)-\oplb'(\cdot))}-F_{X}^{-1}(1-t)-\tfrac{1}{2\varc}(\oplb'(0)-\oplb'(1-t))\big) \pw(\dt)}{ \int_{[0, 1]}u'\big(\ol{F_{X}^{-1}(\cdot)+\frac{1}{2\varc}(\oplb'(0)-\oplb'(\cdot))}-F_{X}^{-1}(1-t)-\tfrac{1}{2\varc}(\oplb'(0)-\oplb'(1-t))\big) \pw(\dt)}\\
&=\oplc{\oplb'(\cdot)}(p).
\end{align*} 
As $\barg$ is the optimal solution to the problem \eqref{opi2}, thanks to \citelem{op2}, we have \eqref{vi001} which can be written as 
\begin{align*} 
\min\Big\{\max\big\{-\tfrac{1}{2\varc}\oplb''(p), \; \oplb(p)-\oplc{\oplb'(\cdot)}(p)\big\}, \; h(p)-\tfrac{1}{2\varc}\oplb''(p)
\Big\}=0.
\end{align*} 
By virtue of \citelem{tech1}, we deduce that $\oplb$ is a solution to \eqref{vi002} in $ C^{2-}([0, 1])$. 

\item 
Because $\oplb \in C^{2-}([0, 1])$, the definition \eqref{bargdef1} implies that $\barg$ is absolutely continuous, $\barg(0)=0$ and 
\begin{align*}
\barg'(p)=h(p)-\frac{1}{2\varc}\oplb''(p), \quad\mbox{a.e.}\;p\in[0, 1].
\end{align*} 
Setting $\opl(p)=\oplc{\oplb'(\cdot)}(p)$, it follows
\begin{align*} 
\dsp\opl(p)&=\frac{\int_{(1-p, 1]}u'\big(\ol{F_{X}^{-1}(\cdot)+\frac{1}{2\varc}(\oplb'(0)-\oplb'(\cdot))}-F_{X}^{-1}(1-t)-\tfrac{1}{2\varc}(\oplb'(0)-\oplb'(1-t))\big) \pw(\dt)}{\int_{[0, 1]}u'\big(\ol{F_{X}^{-1}(\cdot)+\frac{1}{2\varc}(\oplb'(0)-\oplb'(\cdot))}-F_{X}^{-1}(1-t)-\tfrac{1}{2\varc}(\oplb'(0)-\oplb'(1-t))\big) \pw(\dt)}\\
&=\frac{\dsp\int_{(1-p, 1]}u'\big(\ol{\barg(\cdot)}-\barg(1-t)\big) \pw(\dt)}{\dsp\int_{[0, 1]}u'\big(\ol{\barg(\cdot)}-\barg(1-t)\big) \pw(\dt)}, 
\end{align*}
which confirms \eqref{opldef}. We now show \eqref{vi001} and \eqref{oplbdef} are also satisfied. By virtue of \citelem{tech1}, the OIDE in \eqref{vi002} can be written as 
\begin{align*} 
\min\Big\{\max\big\{\barg'(p)-h(p), \; \oplb(p)-\opl(p)\big\}, \;\barg'(p)\Big\}=0, 
\end{align*} 
proving \eqref{vi001}. 
It follows from $\oplb(1)=1$ and \eqref{bargdef1} that 
\begin{align*} 
\oplb(p)&=c(p-1)+2\varc\int_{p}^{1} \big(\barg(t)-F_{X}^{-1}(t)\big)\dt+1
\end{align*} 
for some constant $c$. This together with the boundary condition $\oplb(0)=1-\vare$ confirms \eqref{oplbdef}. 
Now applying \citelem{op2}, we conclude that $\barg$ is an optimal solution to the problem \eqref{opi2}.
Consequently, $\overline{R}$ is the optimal solution to the problem \eqref{opi} by \eqref{RG}. 
\end{enumerate} 
Suppose $\oplb_1$ and $\oplb_2$ are two solutions in $ C^{2-}([0, 1])$ to the OIDE \eqref{vi002}. We then get two optimal solutions to the problem \eqref{opi2} from the relationship \eqref{bargdef1}. But by \citelem{lemexist1}, the optimal solution to \eqref{opi2} is unique, so we conclude $\oplb'_1(0)-\oplb'_1(p)=\oplb'_2(0)-\oplb'_2(p)$ for $p\in[0, 1]$, which implies that $\oplb_1-\oplb_2$ is a linear function. Because $\oplb_1(0)-\oplb_2(0)=0$ and $\oplb_1(1)-\oplb_2(1)=0$, we conclude that $\oplb_1-\oplb_2$ is identical to zero. Therefore, the OIDE \eqref{vi002} admits a unique solution in $ C^{2-}([0, 1])$. 
\end{proof} 

One may wonder if we always have a classical $C^2$ solution to the OIDE \eqref{vi002}. Generally speaking, this is not true. Because the optimal retention to the problem \eqref{opi}, $\overline{R}$ (such as deductible contracts), may not be a $C^1$ function, so is the optimal solution to the problem \eqref{opi2}. From \eqref{bargdef1}, we conclude that the OIDE \eqref{vi002} may not have a $C^2$ function. In the following section, we give an example to show this fact. 

\subsection{An example with explicit solution.}\label{example}
In this section, we provide an example whose solution will be derived from \citethm{main1} explicitly. In this example, the probability measure $\pw$ is highly nonlinear, so the problem \eqref{opi} is indeed a behavioral model (see \citeremark{remark:bhm}), resulting in a non-classical PO insurance contract. 

Let 
\begin{align*}
F_X(x)=\frac{x+1}{2},\;x\in[0,1]; \quad \theta=\frac{13}{12}, \quad \sigma=\frac{1}{2}, \quad u(x)=-e^{-x}.
\end{align*}
Let the probability measure $\pw:[0,1]\to[0,1]$ to be an absolutely continuous function that satisfies the following condition: 
\begin{align*} 
\pw'(1-p)
&=\begin{cases}
ce^{\frac{8}{9}} \big(2p-\frac{1}{9}\big),&\quad p\in\big(\frac{1}{2},\frac{2}{3}\big)\bigskip\\
c\big(p+\frac{5}{9}\big)\exp\big(-p+\frac{14}{9}\big), &\quad p\in\big(\frac{2}{3},1\big),
\end{cases}
\end{align*}
where
\begin{align*} 
c^{-1}&=\frac{13}{36}e^{-\frac{8}{9}}+\int_{[\frac{1}{2},\frac{2}{3}]}e^{\frac{8}{9}}\big(2t-\frac{1}{9}\big)\dt+\int_{[\frac{2}{3},1]}\big(t+\frac{5}{9}\big)\exp\big(-t+\frac{14}{9}\big)\dt.
\end{align*} 
We do not put any constraint on $\pw'(1-p)$, $p\in (0,\frac{1}{2})$. Also, because $u$ is an exponential utility, the wealth position $\wa-\gamma$ of the insured does not affect the optimal contract, so we do not give them explicitly.

It is easy to verify that 
\begin{align*}
\BE{X}=\frac{1}{4}, \quad F_X^{-1}(p)=(2p-1)^+,\quad h(p)=2\idd{p\in (\frac{1}{2},1]}, 
\end{align*}
where $\idd{S}$ is the indicator function for a statement $S$, so $\idd{S}=1$ if $S$ is true and $\idd{S}=0$ otherwise. 

If we can show that 
\begin{align}\label{ep1} 
\oplb(p)&=\frac{8}{9}p-\frac{1}{12}-\frac{1}{2}\big(p-\frac{2}{3}\big)^2\idd{p\in [\frac{2}{3},1]}
+\big(p-\frac{1}{2}\big)^2\idd{p\in [\frac{1}{2},1]}, \quad p\in[0,1],
\end{align}
is a solution to the OIDE \eqref{vi002} in $ C^{2-}([0, 1])$ (but it clearly does not belong to $C^{2}([0, 1])$). 
It then follows from \citethm{main1} that 
\begin{align*} 
\barg(p)&=F_{X}^{-1}(p)+\frac{1}{2\varc}(\oplb'(0)-\oplb'(p))=\big(p-\tfrac{2}{3}\big)^+, 
\quad p\in[0,1],
\end{align*}
is the optimal solution to the problem \eqref{opi2} and 
\begin{align*}
\overline{R}(x)&=\barg(F_X(x))=\frac{1}{6}(3x-1)^+,\quad x\in[0,1],
\end{align*}
is an optimal retention to the problem \eqref{opi}. This contract is neither a classical PO deductible, nor a proportional insurance contract. It is indeed because the nonlinear probability measure $\pw$ makes the problem \eqref{opi} a behavioral model under probability distortion (see \citeremark{remark:bhm}). 

We now show that $\oplb$ defined by \eqref{ep1} is a solution to \eqref{vi002}.
It is evident that $\oplb(0)=-\frac{1}{12}=1-\vare$ and $\oplb(1)=1$, so the two boundary conditions in \eqref{vi002} are satisfied. 
Also, trivially, 
\begin{multline*} 
\min\Big\{\max\big\{-\oplb''(p), \; \oplb(p)-\oplc{\oplb'(\cdot)}(p)\big\}, \; 2\varc h(p)-\oplb''(p)\Big\}\\
=\min\Big\{\max\big\{0, \; \oplb(p)-\oplc{\oplb'(\cdot)}(p)\big\}, \; 0\Big\}=0, \quad p\in\big(0,\tfrac{1}{2}\big), 
\end{multline*} 
so $\oplb$ satisfies the OIDE in \eqref{vi002} on $(0,\frac{1}{2})$. 
If we can now show 
\begin{align}\label{example1}
\oplb(p)-\oplc{\oplb'(\cdot)}(p)=0,\quad p\in\big(\tfrac{1}{2},1\big),
\end{align} 
then it follows that 
\begin{multline*} 
\min\Big\{\max\big\{-\oplb''(p), \; \oplb(p)-\oplc{\oplb'(\cdot)}(p)\big\}, \; 2\varc h(p)-\oplb''(p)\Big\}\\
=\min\Big\{\max\big\{-2, \; 0\big\}, \; 2-2\Big\}=\min\big\{ 0, \; 0\big\}=0, 
\quad p\in\big(\tfrac{1}{2},\tfrac{2}{3}\big),
\end{multline*}
and 
\begin{multline*} 
\min\Big\{\max\big\{-\oplb''(p), \; \oplb(p)-\oplc{\oplb'(\cdot)}(p)\big\}, \; 2\varc h(p)-\oplb''(p)\Big\}\\
=\min\Big\{\max\big\{-1, \; 0\big\}, \; 2-1\Big\}=\min\big\{0, \; 1\big\}=0,\quad p\in\big(\tfrac{2}{3},1\big),
\end{multline*}
so we can conclude that $\oplb$ is a solution to \eqref{vi002}. 

It suffices to prove \eqref{example1}.
Because $\oplb(1)=1=\oplc{\oplb'(\cdot)}(1)$, we only need to prove 
\[\frac{\dd}{\ddp}\oplc{\oplb'(\cdot)}(p)=\oplb'(p),\quad p\in\big(\tfrac{1}{2},1\big).\]
Because $u(x)=-e^{-x}$, we have 
\begin{align*} 
\oplc{\oplb'(\cdot)}(p)&=\frac{ \int_{(1-p, 1]} u'\big(\ol{F_{X}^{-1}(\cdot)+\frac{1}{2\varc}(\oplb'(0)-\oplb'(\cdot))}-F_{X}^{-1}(1-t)-\tfrac{1}{2\varc}(\oplb'(0)-\oplb'(1-t))\big) \pw(\dt)}{ \int_{[0, 1]}u'\big(\ol{F_{X}^{-1}(\cdot)+\frac{1}{2\varc}(\oplb'(0)-\oplb'(\cdot))}-F_{X}^{-1}(1-t)-\tfrac{1}{2\varc}(\oplb'(0)-\oplb'(1-t))\big) \pw(\dt)}\\
&=\frac{ \int_{(1-p, 1]} \exp\big(F_{X}^{-1}(1-t)-\oplb'(1-t)\big) \pw'(t)\dt}
{\int_{[0, 1]}\exp\big(F_{X}^{-1}(1-t)-\oplb'(1-t)\big) \pw'(t)\dt}\\
&=d^{-1}\int_{[0,p)} \exp\big(F_{X}^{-1}(t)-\oplb'(t)\big) \pw'(1-t)\dt,
\end{align*} 
where 
\[d=\int_{[0, 1]}\exp\big(F_{X}^{-1}(1-t)-\oplb'(1-t)\big) \pw'(t)\dt
=\int_{[0, 1]}\exp\big(F_{X}^{-1}(t)-\oplb'(t)\big) \pw'(1-t)\dt.\]
It follows that 
\begin{align*}
\frac{\dd}{\ddp}\oplc{\oplb'(\cdot)}(p)&=d^{-1}\exp\big(F_{X}^{-1}(p)-\oplb'(p)\big)\pw'(1-p)\\
&=d^{-1}\exp\big(2p-1-\big(2p-\frac{1}{9}\big)\big)ce^{\frac{8}{9}} \big(2p-\frac{1}{9}\big)\\
&=cd^{-1}\big(2p-\frac{1}{9}\big)\\
&=cd^{-1}\oplb'(p), \qquad\qquad\qquad p\in\big(\tfrac{1}{2},\tfrac{2}{3}\big),
\end{align*}
and 
\begin{align*}
\frac{\dd}{\ddp}\oplc{\oplb'(\cdot)}(p)&=d^{-1}\exp\big(F_{X}^{-1}(p)-\oplb'(p)\big)\pw'(1-p)\\
&=d^{-1}\exp\big(2p-1-\big(p+\frac{5}{9}\big)\big) c\big(p+\frac{5}{9}\big)\exp\big(-p+\frac{14}{9}\big)\\
&=cd^{-1}\big(p+\frac{5}{9}\big)\\
&=cd^{-1}\oplb'(p), \qquad\qquad\qquad p\in\big(\tfrac{2}{3},1\big).
\end{align*}
Hence, it suffices to show $c=d$. Using the above equations, we see
\begin{align*} 
d&=\int_{[0, 1]}\exp\big(F_{X}^{-1}(t)+\oplb'(t)\big) \pw'(1-t)\dt\\ 
&=\int_{[0, \frac{1}{2}]}\exp\big(F_{X}^{-1}(t)+\oplb'(t)\big) \pw'(1-t)\dt\\
&\qquad\;+\int_{(\frac{1}{2}, 1]}\exp\big(F_{X}^{-1}(t)+\oplb'(t)\big) \pw'(1-t)\dt\\ 
&=\int_{[0, \frac{1}{2}]}e^{\frac{8}{9}} \pw'(1-t)\dt+\int_{(\frac{1}{2}, 1]}c\oplb'(t)\dt\\ 
&=e^{\frac{8}{9}} (1-\pw([0,1/2]))+c(1-\oplb(1/2))\\
&=e^{\frac{8}{9}} (1-\pw([0,1/2]))+\frac{23}{36}c.
\end{align*} 
Also, it follows from definitions of $c$ and $\mu'$ that 
\begin{align*} 
1-\frac{13}{36}e^{-\frac{8}{9}}c 
&=c\int_{[\frac{1}{2},\frac{2}{3}]}e^{\frac{8}{9}}\big(2t-\frac{1}{9}\big)\dt
+c\int_{[\frac{2}{3},1]}\big(t+\frac{5}{9}\big)\exp\big(-t+\frac{14}{9}\big)\dt\\
&=\int_{[\frac{1}{2},1]}\pw'(1-t)\dt=\pw([0,1/2]).
\end{align*} 
As a by-product, it implies $0<\pw([0,1/2])<1$, so $\pw$ is a well-defined probability measure. 
Moreover, the above two equations lead to 
\begin{align*} 
d &=e^{\frac{8}{9}} (1-\pw([0,1/2]))+\frac{23}{36}c=\frac{13}{36}c+\frac{23}{36}c=c.
\end{align*} 
This completes the proof.

\subsection{Special case: when $\pw$ has a density.}

Generally speaking, the OIDE problem \eqref{vi002} is not easy to solve, even numerically, because the operator $\oplc{\oplb'(\cdot)}$ is non-local. But when $\pw$ has a density, we can further reduce the OIDE problem to an initial value ODE problem of two unknown functions that can be effectively numerically solved. 

\begin{thm}[Optimal solution when $\pw$ has a density]
Suppose the probability measure $\pw$ has a density $\pw'$. Then $\oplb$ is the solution to \eqref{vi002} in $ C^{2-}([0, 1])$ if and only if it can be expressed as 
\begin{align} \label{oplbdef2}
\oplb(p)=2\varc\int_0^p (F_{X}^{-1}(t)-\opld(t))\dt+\varf p+1-\vare, 
\end{align}
where $(\ople, \opld, c, d, \varf)$ is the unique solution to the following ODE system 
\begin{equation} \label{vi002a}
\begin{cases}
\min\Big\{\max\big\{\opld'(p)-h(p), \; \ople(p)\big\}, \;\opld'(p)\Big\}=0, \qquad\bigskip\\
\ople'(p)=2\varc(F_{X}^{-1}(p)-\opld(p))+\varf-c u'(d-\opld(p)) \pw'(1-p), 
\hfill\quad\mbox{a.e.}\;p\in[0, 1], \bigskip\\
\ople(0)=1-\vare, \; \ople(1)=0, \; \opld(0)=0, \;\ol{\opld(\cdot)}=d, \; \varf=\vare+2\varc\int_0^1 \opld(t)\dt-2\varc \BE{X}, 
\end{cases}
\end{equation} 
in the sense that $\ople$, $\opld\in \setac([0, 1])$, and $(c, d, \varf)\in(0, \infty)\times\R\times\R$. 
Moreover, $\opld$ is the optimal solution to the problem \eqref{opi2}. 
\end{thm}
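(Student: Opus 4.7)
The plan is to reduce the nonlocal OIDE \eqref{vi002} to the local ODE system \eqref{vi002a} via a change of unknowns that exploits the density $\pw'$ to collapse the Choquet-type operator $\oplc{\cdot}$ into a pointwise expression. I will construct a bijection between $C^{2-}([0,1])$ solutions of \eqref{vi002} and tuples $(\ople, \opld, c, d, \varf)$ solving \eqref{vi002a}, and then invoke \citethm{main1} to transfer existence and uniqueness across this correspondence. The ``moreover'' claim will then be immediate from \citethm{main1}(2), since by construction $\opld$ coincides with $\barg$ from \eqref{bargdef1}.

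For the forward direction, given $\oplb$ solving \eqref{vi002}, I would set $\opld = \barg$ as in \eqref{bargdef1}, $\ople = \oplb - \oplc{\oplb'(\cdot)}$, $d = \ol{\opld(\cdot)}$, $\varf = \oplb'(0)$, and $c = \big(\int_0^1 u'(d - \opld(s))\,\pw'(1-s)\,\ds\big)^{-1}$. The cornerstone algebraic observation is
\[
F_{X}^{-1}(p) + \tfrac{1}{2\varc}\big(\oplb'(0) - \oplb'(p)\big) = \opld(p), \qquad \oplb'(p) = 2\varc\big(F_X^{-1}(p) - \opld(p)\big) + \varf,
\]
which collapses the argument of $u'$ inside $\oplc{\oplb'(\cdot)}(p)$ to $d - \opld(1-t)$. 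Since $\pw$ has a density, the change of variable $s = 1-t$ gives $\oplc{\oplb'(\cdot)}(p) = c\int_0^p u'(d-\opld(s))\,\pw'(1-s)\,\ds$, and differentiation yields the second ODE of \eqref{vi002a}. The boundary values $\ople(0) = 1-\vare$ and $\ople(1) = 0$ follow from $\oplb(0) = 1-\vare$, $\oplb(1) = 1$, together with $\oplc{\oplb'(\cdot)}(0) = 0$ and $\oplc{\oplb'(\cdot)}(1) = 1$. Finally, using $\oplb''(p) = 2\varc(h(p) - \opld'(p))$ and applying \citelem{tech1} with $k = m = 2\varc$ and $\ell = 1$ rewrites the OIDE in \eqref{vi002} as the first equation of \eqref{vi002a}.

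For the converse, given a solution $(\ople, \opld, c, d, \varf)$ of \eqref{vi002a}, I would define $\oplb$ by \eqref{oplbdef2}. Absolute continuity of $\opld$ and $F_X^{-1}$ places $\oplb$ in $C^{2-}([0,1])$, and a direct computation using the prescribed formula for $\varf$ yields $\oplb(0) = 1-\vare$ and $\oplb(1) = 1$. The critical self-consistency point is recovering the normalization $c\int_0^1 u'(d-\opld(s))\,\pw'(1-s)\,\ds = 1$: integrating the $\ople$-ODE on $[0,1]$ and combining the two boundary values of $\ople$ with the formula for $\varf$ cancels every other term. With this in hand the forward computation runs in reverse to give $\oplc{\oplb'(\cdot)}(p) = \oplb(p) - \ople(p)$, and \citelem{tech1} applied to the first ODE of \eqref{vi002a} recovers the OIDE \eqref{vi002}.

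Uniqueness is then immediate: any two solutions of \eqref{vi002a} yield, via \eqref{oplbdef2}, two $C^{2-}$ solutions of \eqref{vi002}, which must coincide by \citethm{main1}; this pins down $\opld$ through \eqref{bargdef1}, then $d$, $c$, and $\varf$ through their defining relations, and finally $\ople$ as the unique solution of a first-order ODE with a prescribed initial value. The main technical subtlety is the self-consistency of $c$: it is not an extra hypothesis but is \emph{forced} by matching both boundary values of $\ople$, and making this point rigorous is what justifies the reduction of the genuinely nonlocal problem \eqref{vi002} to the local ODE system \eqref{vi002a}.
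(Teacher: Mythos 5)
Your proposal is correct and follows essentially the same route as the paper: both directions use the same change of unknowns $\opld(p) = F_X^{-1}(p) + \tfrac{1}{2\varc}(\oplb'(0) - \oplb'(p))$, the same collapse of the nonlocal operator $\oplc{\oplb'(\cdot)}$ to $c\int_0^p u'(d-\opld(s))\pw'(1-s)\,\ds$ via the density and the change of variable $s = 1-t$, and the same appeal to \citelem{tech1} and \citethm{main1} for the obstacle rewriting and for uniqueness. The one small step you leave implicit in the forward direction is the verification that $\varf := \oplb'(0)$ in fact equals $\vare + 2\varc\int_0^1\opld(t)\dt - 2\varc\BE{X}$; this drops out of integrating your own identity $\oplb'(p) = 2\varc(F_X^{-1}(p)-\opld(p)) + \varf$ over $[0,1]$ and using $\oplb(1)-\oplb(0)=\vare$, which is exactly how the paper closes that gap. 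You correctly identify the self-consistency of the normalization constant $c$ as the only genuinely delicate point in the converse direction, and your uniqueness argument is a slightly more spelled-out version of the paper's.
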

\begin{proof}
Suppose $\oplb$ is the solution to \eqref{vi002} in $ C^{2-}([0, 1])$. Set 
\begin{align*} 
\dsp\ople(p)&=\oplb(p)-\oplc{\oplb'(\cdot)}(p), \\
\opld(p)&=F_{X}^{-1}(p)+\frac{1}{2\varc}(\oplb'(0)-\oplb'(p)), \\
c^{-1}&=\int_{[0, 1]}u'\big(\ol{F_{X}^{-1}(\cdot)+\frac{1}{2\varc}(\oplb'(0)-\oplb'(\cdot))}-\opld(1-p) \big) \pw(\dt), \\
d&=\ol{F_{X}^{-1}(\cdot)+\frac{1}{2\varc}(\oplb'(0)-\oplb'(\cdot))}, \\
\varf &=\oplb'(0).
\end{align*}
By virtue of \citelem{tech1}, it is easy to check that $(\ople, \opld, c, d, \varf)$ is a solution to \eqref{vi002a} except for the last boundary condition
\[\varf=\vare+2\varc\int_0^1 \opld(t)\dt-2\varc \BE{X}.\]
Integrating both sides of 
\[\opld(p)=F_{X}^{-1}(p)+\frac{1}{2\varc}(\varf-\oplb'(p))\]
over $[0,1]$ and using $\oplb(1)-\oplb(0)=\vare$, we get the last boundary condition in \eqref{vi002a}.
By \eqref{bargdef1}, we see that $\opld$ is the optimal solution to the problem \eqref{opi2}. 

To show the reverse implication, 
we suppose $(\ople, \opld, c, d, \varf)$ is a solution to \eqref{vi002a}. Set $\oplb$ by \eqref{oplbdef2}, then $ \oplb\in C^{2-}([0, 1])$ and $\oplb(1)=1$ by virtue of the last boundary condition of \eqref{vi002a}. 
Thanks to \eqref{oplbdef2} and the boundary condition $\opld(0)=0$, we get
\begin{align*} 
\opld(p)&=F_{X}^{-1}(p)+\frac{1}{2\varc}(\oplb'(0)-\oplb'(p)).
\end{align*} 
So
\begin{align*} 
\opld'(p)&=h(p)-\frac{1}{2\varc}\oplb''(p), 
\end{align*} 
and
\[d=\ol{\opld(\cdot)}=\ol{F_{X}^{-1}(\cdot)+\frac{1}{2\varc}(\oplb'(0)-\oplb'(\cdot))}.\]
Writing the second ODE in \eqref{vi002a} as 
\[\ople'(p)=\oplb'(p)-c u'(d-\opld(p)) \pw'(1-p),\]
and using $\oplb(0)=\ople(0)$, we obtain 
\begin{multline*} 
\oplb(p)-\ople(p)=c \int_{(1-p, 1]}u'(d-\opld(1-t)) \pw'(t)\dt\\
=c\int_{(1-p, 1]}u'\big(\ol{F_{X}^{-1}(\cdot)+\frac{1}{2\varc}(\oplb'(0)-\oplb'(\cdot))}-F_{X}^{-1}(1-t)-\tfrac{1}{2\varc}(\oplb'(0)-\oplb'(1-t))\big) \pw(\dt). 
\end{multline*}
By virtue of $\oplb(1)-\ople(1)=1$, it yields 
\begin{align*} 
c^{-1}&=\int_{[0, 1]}u'\big(\ol{F_{X}^{-1}(\cdot)+\frac{1}{2\varc}(\oplb'(0)-\oplb'(\cdot))}-F_{X}^{-1}(1-t)-\tfrac{1}{2\varc}(\oplb'(0)-\oplb'(1-t))\big) \pw(\dt)>0, 
\end{align*}
and consequently, \[\oplb(p)-\ople(p)=\oplc{\oplb'(\cdot)}(p).\] Replacing $\ople$ and $\opld$ in the first ODE in \eqref{vi002a}, we conclude that $\oplb$ is a solution to \eqref{vi002} in $ C^{2-}([0, 1])$. By \eqref{bargdef1}, $\opld$ is the optimal solution to the problem \eqref{opi2}. 
Because \eqref{vi002} admits a unique solution in $ C^{2-}([0, 1])$, the above equivalency shows that \eqref{vi002a} admits a unique solution. 
\end{proof}

Finally, we propose a numerical scheme to solve \eqref{vi002a}. 
To solve \eqref{vi002a}, one can try, for each triple $(c, d, \varf)\in(0, \infty)\times\R\times\R$, to get the numerical solution to the following initial value problem
\begin{equation*} 
\begin{cases}
\min\Big\{\max\big\{\opld'(p)-h(p), \; \ople(p)\big\}, \;\opld'(p)\Big\}=0, \qquad\bigskip\\
\ople'(p)=2\varc(F_{X}^{-1}(p)-\opld(p))+\varf-c u'(d-\opld(p)) \pw'(1-p), 
\hfill\quad\mbox{a.e.}\;p\in[0, 1], \bigskip\\
\ople(0)=1-\vare, \quad \opld(0)=0, 
\end{cases}
\end{equation*} 
until its solution $(\ople,\opld)$ satisfies $\ople(1)=0$, $\ol{\opld(\cdot)}=d$ and $\varf=\vare+2\varc\int_0^1 \opld(t)\dt-2\varc \BE{X}$. Then $(\ople, \opld, c, d, \varf)$ is the solution to \eqref{vi002a}.

The above scheme can be further optimized. For instance, because $\opld$ is optimal to the problem \eqref{opi2}, $\opld$ is bounded in $[0, \esup X]$ by virtue of \eqref{boundedsetg}. Using this fact, one can establish explicit bounds for $c$, $d$, and $\varf$. 
Therefore, one just needs to look for the desired tuple $(c, d, \varf)$ in a known bounded region rather than in the whole space $(0, \infty)\times\R\times\R$. 

\par 

%%%%%%%%%%%%%%%%%%%%%%%%%%%%%%%%%%%%%%%%%%%%%%%% 
\section{Concluding remarks.}\label{conclusion}
%%%%%%%%%%%%%%%%%%%%%%%%%%%%%%%%%%%%%%%%%%%%%%%% 
In this paper, we have considered the PO moral-hazard-free insurance contract problem for an insurer using the mean-variance premium principle and an insured using rank-dependent utility preference. We have mainly focused on the theoretical study of the problem and proposed a numerical scheme to solve the associated ODE problem. We believe the most up-to-date numerical methods to solve differential equations such as neural networks and deep learning might be applied to our OIDE and ODE problems. We encourage experts in the relevant fields to study them. We also encourage the interested readers to extend our model to incorporate other premium principles such as the standard deviation premium principle. We believe our method should be effective whenever the premium principle has certain concavity. 

%\newpage
%%%%%%%%%%%%%%%%%%%%%%%%%%%%%%%%%%%%%%%%%%%%%%%%%%%%%%

%%%%%%%%%%%%%%%%%%%%%%%%%%%%%%%%%%%%%%%%%%%%%%%%%%%%%%

\end{document}